\newtheorem{defi}{Definition}
\newtheorem{teo}{Theorem}
\newtheorem{lemi}{Lemma}
\newtheorem{propi}{Proposition}
\begin{document}
	
\title{Effort of rugby teams according to the bonus point system: a theoretical and empirical analysis\thanks { 
		We thank the support of the Agencia Nacional de Promoci\'on Cient\'{\i}fica y Tecnol\'ogica (Argentina) through grant PICT-2017-2355.  Alejandro Neme also acknowledges  the financial support received from the UNSL (grant PROICO-32016) and from the Consejo Nacional de Investigaciones Cient\'{\i}ficas y T\'ecnicas (Argentina), through grant PIP 112-200801-00655.}}
\author[1]{Federico Fioravanti \footnote{federico.fioravanti9@gmail.com (Corresponding author)}}
\author[2]{Alejandro Neme\footnote{aneme@unsl.edu.ar}}
\author[1]{Fernando Tohm\'e \footnote{ftohme@criba.edu.ar}}
\author[1]{Fernando Delbianco \footnote{fernando.delbianco@uns.edu.ar}} 

\affil[1]{INMABB, Universidad Nacional del Sur, Bah\'{\i}a Blanca, Argentina}
\affil[2]{IMASL, Universidad Nacional de San Luis, San Luis, Argentina}
\date{}
\maketitle
\begin{abstract}
Using a simple game-theoretical model of contests, we compare the effort exerted by rugby teams under three different point systems  used in tournaments around the world. The scoring systems under consideration are {\em NB}, {\em +4} and {\em 3+}.  We state models of the games under the three point systems, both static and dynamic. In all those models we find that the $3+$ system ranks first, $+4$ second and $NB$ third. We run empirical analyses using data from matches under the three scoring systems. The results of those statistical analyses confirm our theoretical conclusions.

 
\end{abstract}

\section {Motivation}
Rugby is a sport in constant evolution.  This can be easily seen in the fact that its rules are being continuously reevaluated. This leads to experiments in which alternative rules are tested and, if the results are satisfactory, become part of the rulebook.\\
These continuous revisions and modifications are intended to increase both the safety of the players and the pleasure of watching the game.  That is, some rules are changed to make the game safer, reducing the number of injuries suffered by the players, while others are modified to make the game more entertaining for players, coaches and spectators alike.\\
Some of the latter kind of modifications involve the points awarded to teams, depending on, among other factors, the number of tries scored. Fortunately for the exploration of alternative rules, the organizers of tournaments are given a free hand to choose the point system to be applied and thus to experiment with those variations. For example, in the World Cup (or in the Rugby Championship, a tournament in which the only participants are the best four national teams of the Southern Hemisphere), four points are awarded to the winning team, two to each team in case of a tie and no points to the losing team. 
Besides the points awarded for winning, tying or losing a game, an extra point (usually called bonus point) is awarded to the team that scores four or more tries, and an extra point to the losing team, if the difference in the score is seven points or less.\\ 
Another point system used around the world, for example in the Super Rugby or French Top 14, two of the most important club tournaments in the world, consists in giving an extra point to the winning team if it scores three more tries than its rival, and an extra point is awarded to the losing team if the difference in the score is seven or less.\\ 
In this work we compare the effort of the teams under these two point systems as well as under the point system in which no bonus points are awarded. Clearly, there is no consensus on which one is the best, reflected in the fact that different important tournaments around the world use different point systems. But there exists a consensus on that games are more entertaining  when teams fight to the end to win a match. This behavior can be induced by point rules that give incentives to teams to exert more effort in order to succeed.\\ 
To start analyzing the effort aspects of point systems, notice that there are different ways in which a team can score in rugby . One is by grounding the ball in the other team's ingoal.\footnote{An ingoal is a rectangular area at the end of the field, which has two of them, each one corresponding to one of the teams.} This is called {\em scoring a try} and the team that does it, earns five points. When a team scores a try, it can opt to kick to the posts, getting two more points. On the other hand, a kick from the ground (called a penalty kick) or a drop kick (a kick after the ball bounced) that goes through the posts, awards the team three points.\\
The aforementioned scoring methods are under evaluation. World Rugby (the international governing body of rugby union) is interested in knowing whether giving more points for a try or giving less for a place kick induces teams to score more tries. \\
A common understanding of a team devoting more effort in a game is that it plays more offensively, defends with more attitude and forces its players to play as hard as they can. We want to see if the effort spent in each bonus point system correlates in some way with the number of tries. We intend to determine, resorting to a game-theoretical analysis, which point system induces teams to devote more effort. Furthermore, using real-world data we check the validity of our theoretical conclusions, which can be useful for sports planners who intend to design tournaments on sound theoretical and empirical grounds.\\
Although a sport like rugby can be really difficult to model, due to the presence of variables with an uncountable number of possible values, there are effective ways of simplifying the analysis. The use of a simple game-theoretical model of contests allows to predict the behavior of the teams and to find the most adequate strategies for each instance of a match.\\ 
Some of the authors that have modeled different aspects of sports using game-theoretical tools are Walker and Woodens (2001) for tennis, Chiappori, Levitt and Groseclose (2002), Palomino, Rigotti and Rustichini (1999) in the case of soccer and Petr\'{o}czi and Haugen (2012) to evaluate the effectiveness of anti-doping policies. An analysis particularly relevant for our purposes compares the strategies of two soccer teams under the two and the three points scoring systems (Brocas and Carrillo, 2004). The authors conclude, rather unsurprisingly, that teams become more offensive if they are awarded three points when they win. But interestingly, they also find that by giving more than three points to a winner makes the teams  more defensive in the first half of the game and so, in average, higher offensiveness is not induced by this point system.\\
In our case, we start modeling a rugby game statically, comparing the effort devoted by the teams under different point systems. We seek to find out which one pushes teams to the limit, making the game more entertaining.  We then try to answer the same question, this time in the framework of a dynamic model, using the results of Mass\'o - Neme (1996), by analyzing the feasible and equilibrium payoffs. We consider the average of joint efforts used to obtain those equilibrum payoffs in each point system, to find out which one induces the teams to play more agressively. Finally, we check the real-world validity of our conclusions using data from different tournaments around the world.\\
The plan of the paper is as follows. In Section 2 we present the static model and examine the degrees of offensiveness associated to the different point systems.  In Section 3 we do the same, but in the context of a dynamic model. We find that the order of offensiveness is the same in both cases, being $3+$ the system that ranks on the top.  In Section 4  we run empirical analyses in order to corroborate the validity of those results, which are confirmed by the data of various tournaments under the three systems. Finally, Section 5 concludes.

\section{The Static Model}
We intend here to model, in a simple way, the effects of point systems on the choice of the levels of effort of teams. We consider two teams, $A$ and $B$. The possible events in a match are denoted $(a,b)\in \mathbb{N}_{0}\times\mathbb{N}_{0}$, where $\mathbb{N}_{0}$ represents the natural numbers plus $0$. Letters $a$ and $b$ stand for the tries scored by teams $A$ and $B$, respectively.\\ 
To simplify the analysis, we disregard the precise differences between goals, penalty kicks or drop kicks, and just focus on the tries scored and the joint efforts of the teams. In each event, we consider a contest in which two risk - neutral contestants are competing to score a try, and win the points awarded by the points system.\footnote{We consider that teams totally discount the future, and assume that the game can end after they score a try.} The contestants differ in their valuation of the prize. Each contestant $i\in\{A,B\}$ independently exerts an irreversible and costly effort $e_{i}\geq0$, which will determine, through a {\it contest success function} (CSF), which team wins the points. Formally, the CSF maps the profile of efforts $(e_{A},e_{B})$ into probabilities of scoring a try. We adopt the logit formulation, since it is the most widely used in the analysis of sporting contests (Dietl et al., 2011). Its general form was introduced by Tullock (1980), although we use it here with a slight modification:\footnote{When teams exert no effort, the probability of scoring a try is 0. This allows to get a tie as a result.}

 $$p_{i}(e_{A},e_{B})= \left\{ \begin{array}{lcc}
             \dfrac{e_{i}^{\alpha}}{e_{A}^{\alpha}+e_{B}^{\alpha}} &   if  & max\{e_{A},e_{B}\}>0 
             \\0 &   & otherwise
             \end{array}
   \right.$$

The parameter $\alpha>0$ is called the \textquotedblleft discriminatory power\textquotedblright{} of the CSF, measuring the sensitivity of success to the level of effort exerted.\footnote{This CSF satisfies homogeneity. That is, when teams exert the same level of effort, they have the same probabilities of winning the contest. This is a plausible hypothesis when teams have the same level of play.} We normalize it and set $\alpha=1$. Associated to effort there is a cost function $c_{i}(e_{i})$, often assumed linear in the literature,

$$c_{i}(e_{i})=ce_{i}$$

\noindent where $c>0$ is the (constant) marginal cost of effort.\\

The utility or payoff function when the profile of efforts is $(e_{A},e_{B})$ and the score is $(a,b)$, has the following form (we omit the effort argument for simplicity):

$$U_{A}((e_{A},e_{B}),(a,b))=$$
$$p_{A}(f_{A}(a+1,b)+k_{B1}\epsilon)+(1-p_{A}-p_{B})(f_{A}(a,b)+k_{B2}\epsilon))+p_{B}f_{A}(a,b+1)-ce_{A}$$

\noindent for team $A$, and

$$U_{B}((e_{A},e_{B}),(a,b))=$$
$$p_{A}f_{B}(a+1,b)+(1-p_{A}-p_{B})(f_{B}(a,b)+k_{A2}\epsilon))+p_{B}(f_{B}(a,b+1)+k_{A1})\epsilon)-ce_{B}$$

\noindent for team $B$, where

$$k_{B1}=f_{B}(a,b+1)-f_{B}(a+1,b)$$
$$k_{B2}=f_{B}(a,b+1)-f_{B}(a,b)$$
$$k_{A1}=f_{A}(a+1,b)-f_{A}(a,b+1)$$
$$k_{A2}=f_{A}(a+1,b)-f_{A}(a,b)$$

\noindent and
$f_{i}:\mathbb{N}_{0}\times\mathbb{N}_{0}\rightarrow \{0,1,2,3,4,5\}$  depends on the point system we are working with. It is defined on the final scores and yields the points earned by team $i$. 
Each point system is characterized by a different function.\\

In the case where no bonus point ($NB$ system) is awarded, we have:\\

 $f_{A}^{NB}(a,b)= \left\{ \begin{array}{lcc}
             4 &   if  & a>b 
             \\ 2 &  if & a=b
             \\ 0 &  if  & a<b
             \end{array}
   \right.$

   $f_{B}^{NB}(a,b)= \left\{ \begin{array}{lcc}
                0 &   if  & a>b 
                \\ 2 &  if & a=b
                \\ 4 &  if  & a<b
                \end{array}
      \right.$\\
     
When a bonus point is given for scoring $4$ or more tries, and for losing by one try ($+4$ system) the functions are:\\

$f_{A}^{+4}(a,b)= \left\{ \begin{array}{lcccccccc}
             4 &   if  & a>b & and & a<4 
             \\5 &   if  & a>b & and & a\geq 4 
             \\  2 &  if & a=b & or & [a>4 & and & b-a=1] \\
             3 &   if  & a=b & and & a\geq 4\\
             0 &   if  & a<4 & and & b-a>1\\
             1 & if & [a+1<b & and & a\geq 4] & or & b-a=1
             \end{array}
   \right.$

$f_{B}^{+4}(a,b)= \left\{ \begin{array}{lcccccccc}
                4 &   if  & b>a & and & b<4 
                \\5 &   if  & b>a & and & b\geq 4 
                \\  2 &  if & a=b & or & [b>4 & and & a-b=1] \\
                3 &   if  & a=b & and & b\geq 4\\
                0 &   if  &b<4 & and & a-b>1\\
                1 & if & [b+1<a & and & b\geq 4] & or &  a-b=1
                \end{array}
      \right.$\\
      
Finally, when a difference of $3$ tries gives the winning team a bonus point and losing by one try gives the bonus point to the loser ($3+$ system) we have:\\

$f_{A}^{3+}(a,b)= \left\{ \begin{array}{lcccccccccc}
             4 &   if  &  0<a-b<3 
             \\5 &   if  &a-b\geq 3 
             \\  2 &  if & a=b  \\
             0 &   if  & b-a>1 \\
             1 & if & b-a=1
             \end{array}
   \right.$

      $f_{B}^{3+}(a,b)= \left\{ \begin{array}{lcccccccccc}
                   4 &   if  & 0<b-a<3 
                   \\5 &   if & b-a\geq 3 
                   \\  2 &  if & a=b  \\
                   0 &   if & a-b>1 \\
                   1 & if & a-b=1
                   \end{array}
         \right.$    \\  
         
In all three cases the utility functions represent the weighted sum of three probabilities, namely that of team $A$ scoring, that of none of the teams scoring and that of team $B$ scoring. The corresponding weights are the points earned in each case plus the gain of blocking the other team, precluding it of winning points. This gain is defined as the difference between the points that the other team can earn if it scores and the points they get times $\epsilon$, where $0<\epsilon<<1$ is not very large. This $\epsilon$ intends to measure the importance of blocking the other team and not letting it score  and earn more points. Teams are playing a tournament, so making it hard for the other team to earn points is an incentive (although not a great one) in a match. The way we define the utility function rests on the simple idea that to score four tries, one has to be scored first. This captures the assumption that teams care only about the immediate result of scoring, and not about what can happen later.\\ 
Under these assumptions, we seek to find the equilibria corresponding to the three point systems. The appropriate notion of equilibrium here is  in terms of  {\em strict dominant strategies} since the chances of each team are independent of what the other does. Notice that, trivially, each dominant strategies equilibrium is (the unique) Nash equilibrium in the game.\footnote{Nash equilibria exist since the game trivially satisfies the condition of having  compact and convex spaces of strategies while the utility functions have the expected probability form, which ensures that the best response correspondence has a fixed point.} Once obtained these equilibria, the next step of the analysis is to compare them, to determine how the degree of offensiveness changes with the change of rules. This comparison is defined in terms of the following relation:

$$(e_{A},e_{B})\succeq (e_{A^{\prime}},e_{B^{\prime}})\ \mbox{if}\ e_{A}+e_{B}\geq e_{A^{\prime}}+e_{B^{\prime}}$$
\noindent while
$$(e_{A},e_{B})\sim (e_{A^{\prime}},e_{B^{\prime}}) \ \ \mbox{in any other case.}$$

\noindent where $(e_{A},e_{B})\succeq (e_{A^{\prime}},e_{B^{\prime}})$ is understood as ``with $(e_{A},e_{B})$ both teams exert more effort than with $(e_{A^{\prime}},e_{B^{\prime}})$''.\\

We look for the maximum number of tries that can be scored by a team, in order to limit the number of cases to analyze. We use the statistics of games played in different tournaments around the world, which show that, in average, teams can get at most $7$ tries ([12]-[23])(see Section 4). This, in turn leads us to $64$ possible instances ({\em events}).\\

At each event we compare the equilibrium strategies. We thus obtain a ranking of the point systems, based on the $\succeq$ relation. The reaction function of team $i$, describing the best response to any possible effort choice of the other team, can be computed from the following first order conditions:
$$\dfrac{e_{B}}{(e_{A}+e_{B})^{2}}k_{A}=c $$

\noindent for team $A$, and

$$\dfrac{e_{A}}{(e_{A}+e_{B})^{2}}k_{B}=c  $$

\noindent for team $B$, where $k_{A}$ and $k_{B}$ obtain by rearranging the constants of the corresponding utility function. The equilibrium $(e_{A}^{\ast},e_{B}^{\ast})$ in pure strategies is characterized by the intersection of the two reaction functions and is given by:

$$(e_{A}^{\ast},e_{B}^{\ast})=(\dfrac{k_{A}^{2}k_{B}}{c(k_{A}+k_{B})^{2}},\dfrac{k_{A}k_{B}^{2}}{c(k_{A}+k_{B})^{2}}) $$

As an example, consider, without loss of generality, a particular instance (for simplicity we omit the arguments):

\begin{itemize}
\item \textbf{Event (2,0)}\\
\textbf{$NB$ system}
$$U_{A}((e_{A},e_{B}),(2,0))=p_{A}(4+0\epsilon)+(1-p_{A}-p_{B})(4+0\epsilon))+p_{B}4-ce_{A}$$
$$U_{B}((e_{A},e_{B}),(2,0))=p_{A}0+(1-p_{A}-p_{B})(0+0\epsilon))+p_{B}(0+0\epsilon)-ce_{B}$$
The Nash (dominant strategies) equilibrium is given by $(e_{A}^{\ast},e_{B}^{\ast})=(0,0) $\\

\textbf{$3+$ system}
$$U_{A}((e_{A},e_{B}),(2,0))=p_{A}(5+\epsilon)+(1-p_{A}-p_{B})(4+\epsilon))+p_{B}4-ce_{A}$$
$$U_{B}((e_{A},e_{B}),(2,0))=p_{A}0+(1-p_{A}-p_{B})(0+\epsilon))+p_{B}(1+\epsilon)-ce_{B}$$
The equilibrium is $(e_{A}^{\ast},e_{B}^{\ast})=(\dfrac{(1+\epsilon)^{2}(1+\epsilon)}{c(1+\epsilon+1+\epsilon)^{2}},\dfrac{(1+\epsilon)(1+\epsilon)^{2}}{c(1+\epsilon+1+\epsilon))^{2}}) $.\\

\textbf{$+4$ system}
$$U_{A}((e_{A},e_{B}),(2,0))=p_{A}(4+\epsilon)+(1-p_{A}-p_{B})(4+\epsilon))+p_{B}4-ce_{A}$$
$$U_{B}((e_{A},e_{B}),(2,0))=p_{A}0+(1-p_{A}-p_{B})(0+0\epsilon))+p_{B}(1+0\epsilon)-ce_{B}$$
The equilibrium is $(e_{A}^{\ast},e_{B}^{\ast})=(\dfrac{\epsilon^{2}1}{c(1+\epsilon)^{2}},\dfrac{\epsilon1^{2}}{c(1+\epsilon)^{2}}) $.\\
\end{itemize}

Since we assume that $\epsilon$ is sufficiently small, we can infer that teams will exert more effort under the $3+$ system, then in the $+4$ and finally in the $NB$. The comparison of all the possible events yields:

\begin{propi}
The $3+$ system is the Condorcet winner in the comparison among the point systems. By the same token, teams exert more effort under the $+4$ system than in the $NB$ one.
\end{propi}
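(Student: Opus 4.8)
The plan is to collapse each event to a single number, the total equilibrium effort, figure out how that number is ordered across the three rules at each of the $64$ events, and then feed those $64$ event-rankings into the Condorcet criterion.

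First I would simplify the equilibrium. Put $u=f_A(a+1,b)-f_A(a,b+1)$ and $w=f_B(a,b+1)-f_B(a+1,b)$. Each payoff function $f_i$ is weakly monotone — nondecreasing in team $i$'s own score and nonincreasing in the rival's — which one checks directly from the piecewise definitions; hence $u,w\geq0$. Rearranging the first order conditions as indicated in the text gives $k_A=u+w\epsilon$ and $k_B=w+u\epsilon$ (this reproduces the worked example $(2,0)$ for all three rules), so the joint effort at the event is
$$E=e_A^{\ast}+e_B^{\ast}=\frac{k_Ak_B}{c(k_A+k_B)}=\frac{(u+w\epsilon)(w+u\epsilon)}{c(u+w)(1+\epsilon)},$$
with $E=0$ precisely when $u=w=0$. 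This $E$ is symmetric in $(u,w)$, positively homogeneous of degree one, and weakly increasing in each variable — the sign of $\partial E/\partial u$ reduces to $w^2(1-\epsilon+\epsilon^2)+\epsilon u(u+2w)\geq0$. Consequently the $\succeq$-comparison of two events depends only on the unordered pairs $\{u,w\}$, and since swapping $a\leftrightarrow b$ swaps $u\leftrightarrow w$, the comparison at $(a,b)$ equals the one at $(b,a)$; it therefore suffices to treat the $36$ events with $a\geq b$.

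Then I would tabulate $(u,w)$ for each rule. For $NB$ it depends only on $|a-b|$, equalling $(4,4)$, $(2,2)$, $(0,0)$ according as $|a-b|=0$, $1$, $\geq2$. For $3+$ it depends only on $a-b$ and runs, up to a swap, through $(0,0),(1,0),(1,1),(2,2),(3,3)$ as $|a-b|$ decreases from $\geq4$ to $0$. For $+4$ one must additionally track whether each team's score has reached four tries; the pairs occurring are, up to a swap, $(0,0),(1,0),(1,1),(2,2),(3,2),(3,3),(4,4)$. The structural point is that every pair arising from any of the three rules lies on a single chain for the coordinatewise order, so by the monotonicity and homogeneity of $E$ the efforts are totally ordered,
$$E_{(0,0)}<E_{(1,0)}<E_{(1,1)}<E_{(2,2)}<E_{(3,2)}<E_{(3,3)}<E_{(4,4)},$$
the only links not forced by nesting, $E_{(2,2)}<E_{(3,2)}<E_{(3,3)}$, being one-line computations valid for every $\epsilon>0$. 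Each event's three-way comparison is now a lookup in this chain, and I would then tally over all $64$ events (doubling the off-diagonal ones by symmetry): against $+4$ the $3+$ rule wins on $16$ events and loses on $8$ — the losses being the diagonal events with $a=b\in\{3,4\}$, where crossing four tries lifts $+4$'s stakes above $3+$'s, the events $(3,2),(4,3)$ and their mirrors, and $(7,3)$ and its mirror — with the rest ties; against $NB$ it wins on $22$ and loses only on the $8$ diagonal events, where the losing bonus is inert and $NB$ puts the winner's full four points at stake; and $+4$ beats $NB$ on $22$ events against $6$. Hence $3+$ wins both pairwise majorities and is the Condorcet winner, and $+4$ beats $NB$, which is the second claim.

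I expect the $+4$ bookkeeping to be the main obstacle — its asymmetric, threshold-dependent definition produces many subcases — aggravated by the fact that the three rules are \emph{not} pointwise ordered: at several low-scoring symmetric events $NB$ actually elicits the most effort, so there is no domination shortcut and the honest count cannot be avoided.
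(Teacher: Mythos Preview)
Your approach is essentially the paper's: both arguments reduce to computing the equilibrium effort at each of the $64$ events and then running a Condorcet tally, with your $(u,w)$ parametrization and chain-ordering of $E$ being a cleaner route to the same table the paper obtains by brute force. Your tallies differ slightly from the paper's ($16$--$8$, $22$--$8$, $22$--$6$ versus the paper's $16$--$6$, $22$--$7$, $22$--$7$), and in particular your claim that $+4$ beats $3+$ at $a=b=4$ conflicts with the paper's Table~1 (where the two tie, since the four-try bonus is already secured by both sides), but none of these discrepancies --- some of which trace to ambiguities in the paper's own $f^{+4}$ definition --- alters the Condorcet verdict.
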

\begin{proof}
We analyze the 64 possible events. Table $1$ shows the results favoring team $A$. By symmetry, analogous results can be found for team $B$. \\
Consider the following pairwise comparisons:
\begin{itemize}
	\item $NB$ vs. $3+$: 22 events rank higher under {\em 3+}, while 7 under {\em NB}.
	\item $+4$ vs. $NB$: 22 events for the former against 7 with the latter.
	\item $3+$ vs. $+4$: 16 with the former against 6 with {\em +4}.
\end{itemize}
This indicates that $3+$ is the Condorcet winner, while $NB$ is the Condorcet loser. 
\end{proof}

\begin{table}[H]~\label{table}
	\begin{flushleft}
		\resizebox{17cm}{9.5cm}{
			\begin{tabular}{||l l l l l||} 
				\hline
				Events & {\em NB} Equilibrium & {\em 3+} Equilibrium & {\em +4} Equilibrium & Ranking \\ 
				\hline
				(0,0),(1,1),(2,2) & $ (\dfrac{(4+4\epsilon)^{3}}{4c(4+4\epsilon)^{2}},\dfrac{(4+4\epsilon)^{3}}{4c(4+4\epsilon)^{2}})$ & $(\dfrac{(3+3\epsilon)^{3}}{4c(3+3\epsilon)^{2}},\dfrac{(3+3\epsilon)^{3}}{4c(3+3\epsilon)^{2}})$&$(\dfrac{(3+3\epsilon)^{3}}{4c(3+3\epsilon)^{2}},\dfrac{(3+3\epsilon)^{3}}{4c(3+3\epsilon)^{2}})$& $NB\succ3+\sim +4$ \\ 
				
				(3,3)  & $(\dfrac{(4+4\epsilon)^{3}}{4c(4+4\epsilon)^{2}},\dfrac{(4+4\epsilon)^{3}}{4c(4+4\epsilon)^{2}})$ & $(\dfrac{(3+3\epsilon)^{3}}{4c(3+3\epsilon)^{2}},\dfrac{(3+3\epsilon)^{3}}{4c(3+3\epsilon)^{2}})$&$(\dfrac{(4+4\epsilon)^{3}}{4c(4+4\epsilon)^{2}},\dfrac{(4+4\epsilon)^{3}}{4c(4+4\epsilon)^{2}})$& $NB\sim+4\succ 3+$ \\ 
				
				(4,4),(5,5),(6,6), (7,7)& $ (\dfrac{(4+4\epsilon)^{3}}{4c(4+4\epsilon)^{2}},\dfrac{(4+4\epsilon)^{3}}{4c(4+4\epsilon)^{2}})$ & $(\dfrac{(3+3\epsilon)^{3}}{4c(3+3\epsilon)^{2}},\dfrac{(3+3\epsilon)^{3}}{4c(3+3\epsilon)^{2}})$&$(\dfrac{(3+3\epsilon)^{3}}{4c(3+3\epsilon)^{2}},\dfrac{(3+3\epsilon)^{3}}{4c(3+3\epsilon)^{2}})$& $NB\succ3+\sim +4$ \\ 
				
				(1,0),(2,1)  & $ (\dfrac{(2+2\epsilon)^{3}}{4c(2+2\epsilon)^{2}},\dfrac{(2+2\epsilon)^{3}}{4c(2+2\epsilon)^{2}})$ & $(\dfrac{(2+2\epsilon)^{3}}{4c(2+2\epsilon)^{2}},\dfrac{(2+2\epsilon)^{3}}{4c(2+3\epsilon)^{2}})$&$(\dfrac{(2+2\epsilon)^{3}}{4c(2+2\epsilon)^{2}},\dfrac{(2+2\epsilon)^{3}}{4c(2+2\epsilon)^{2}})$& $NB\sim3+\sim +4$ \\ 
				(3,2) & $(\dfrac{(2+2\epsilon)^{3}}{4c(2+2\epsilon)^{2}},\dfrac{(2+2\epsilon)^{3}}{4c(2+2\epsilon)^{2}})$ &$(\dfrac{(2+2\epsilon)^{3}}{4c(2+2\epsilon)^{2}},\dfrac{(2+2\epsilon)^{3}}{4c(2+3\epsilon)^{2}})$&$(\dfrac{(3+2\epsilon)^{2}(2+3\epsilon)}{c(5+5\epsilon)^{2}},\dfrac{(3+2\epsilon)(2+3\epsilon)^{3}}{c(5+5\epsilon)^{2}})$& $+4\succ3+\sim NB$ \\ 
				(4,3) & $(\dfrac{(2+2\epsilon)^{3}}{4c(2+2\epsilon)^{2}},\dfrac{(2+2\epsilon)^{3}}{4c(2+2\epsilon)^{2}})$ &$(\dfrac{(2+2\epsilon)^{3}}{4c(2+2\epsilon)^{2}},\dfrac{(2+2\epsilon)^{3}}{4c(2+3\epsilon)^{2}})$&$(\dfrac{(2+3\epsilon)^{2}(3+2\epsilon)}{c(5+5\epsilon)^{2}},\dfrac{(2+3\epsilon)(3+2\epsilon)^{3}}{c(5+5\epsilon)^{2}})$& $+4\succ3+\sim NB$ \\
				
				(5,4), (6,5), (7,6)   & $ (\dfrac{(2+2\epsilon)^{3}}{4c(2+2\epsilon)^{2}},\dfrac{(2+2\epsilon)^{3}}{4c(2+2\epsilon)^{2}})$ & $(\dfrac{(2+2\epsilon)^{3}}{4c(2+2\epsilon)^{2}},\dfrac{(2+2\epsilon)^{3}}{4c(2+3\epsilon)^{2}})$&$(\dfrac{(2+2\epsilon)^{3}}{4c(2+2\epsilon)^{2}},\dfrac{(2+2\epsilon)^{3}}{4c(2+2\epsilon)^{2}})$& $NB\sim3+\sim +4$ \\ 
				
				(2,0)&$(0,0)$&$(\dfrac{(1+\epsilon)^{3}}{4c(1+1\epsilon)^{2}},\dfrac{(1+\epsilon)^{3}}{4c(1+\epsilon)^{2}})$&$(\dfrac{\epsilon^{2}}{c(1+\epsilon)^{2}},\dfrac{\epsilon}{c(1+\epsilon)^{2}})$& $3+\succ+4\succ NB$ \\ 
				
				(3,1)&$(0,0)$&$(\dfrac{(1+\epsilon)^{3}}{4c(1+1\epsilon)^{2}},\dfrac{(1+\epsilon)^{3}}{4c(1+\epsilon)^{2}})$&$(\dfrac{(1+\epsilon)^{3}}{4c(1+1\epsilon)^{2}},\dfrac{(1+\epsilon)^{3}}{4c(1+\epsilon)^{2}})$& $3+\sim+4\succ NB$\\
				
				(4,2) &$(0,0)$&$(\dfrac{(1+\epsilon)^{3}}{4c(1+1\epsilon)^{2}},\dfrac{(1+\epsilon)^{3}}{4c(1+\epsilon)^{2}})$&$(\dfrac{\epsilon^{2}}{c(1+\epsilon)^{2}},\dfrac{\epsilon}{c(1+\epsilon)^{2}})$& $3+\succ+4\succ NB$ \\ 
				
				(5,3) &$(0,0)$&$(\dfrac{(1+\epsilon)^{3}}{4c(1+1\epsilon)^{2}},\dfrac{(1+\epsilon)^{3}}{4c(1+\epsilon)^{2}})$&$(\dfrac{8\epsilon^{2}}{4c(1+\epsilon)^{2}},\dfrac{8\epsilon}{4c(1+\epsilon)^{2}})$& $3+\succ+4\succ NB$ \\ 
				(6,4), (7,5) &$(0,0)$&$(\dfrac{(1+\epsilon)^{3}}{4c(1+1\epsilon)^{2}},\dfrac{(1+\epsilon)^{3}}{4c(1+\epsilon)^{2}})$&$(\dfrac{\epsilon^{2}}{c(1+\epsilon)^{2}},\dfrac{\epsilon}{c(1+\epsilon)^{2}})$& $3+\succ+4\succ NB$ \\
				
				(3,0)&$(0,0)$&$(\dfrac{\epsilon}{c(1+\epsilon)^{2}},\dfrac{\epsilon^{2}}{c(1+\epsilon)^{2}})$&$(\dfrac{\epsilon}{c(1+\epsilon)^{2}},\dfrac{\epsilon^{2}}{c(1+\epsilon)^{2}})$& $3+\sim+4\succ NB$ \\
				
				(4,1), (5,2) &$(0,0)$&$(\dfrac{\epsilon}{c(1+\epsilon)^{2}},\dfrac{\epsilon^{2}}{c(1+\epsilon)^{2}})$&$ (0,0)$& $3+\succ+4\sim NB$ \\
				
				(6,3) &$(0,0)$&$(\dfrac{\epsilon}{c(1+\epsilon)^{2}},\dfrac{\epsilon^{2}}{c(1+\epsilon)^{2}})$&$(\dfrac{\epsilon}{c(1+\epsilon)^{2}},\dfrac{\epsilon^{2}}{c(1+\epsilon)^{2}})$& $3+\sim+4\succ NB$ \\
				
				(7,4) &$(0,0)$&$(\dfrac{\epsilon}{c(1+\epsilon)^{2}},\dfrac{\epsilon^{2}}{c(1+\epsilon)^{2}})$&$ (0,0)$& $3+\succ+4\sim NB$ \\
				
				(4,0), (5,1), (6,2) &$(0,0)$&$(0,0)$ &$ (0,0)$& $3+\sim+4\sim NB$ \\

				(7,3) &$(0,0)$&$(0,0)$ &$(\dfrac{\epsilon}{c(1+\epsilon)^{2}},\dfrac{\epsilon^{2}}{c(1+\epsilon)^{2}})$& $+4\succ3+\sim NB$ \\
				
				(5,0), (6,1), (7,2) &$(0,0)$&$(0,0)$ &$ (0,0)$& $3+\sim+4\sim NB$ \\
				
				(6,0), (7,1)&$(0,0)$&$(0,0)$ &$ (0,0)$& $3+\sim+4\sim NB$ \\
				
				(7,0)&$(0,0)$&$(0,0)$ &$ (0,0)$& $3+\sim+4\sim NB$ \\
				\hline
		\end{tabular}}
	\end{flushleft}
	\caption{Comparison of point systems}
\end{table}

\newpage
\section{The Dynamic Model}
In this section we model a rugby game following the argumentation line in  Mass\'o - Neme (1996). We conceive it as a dynamic game in which the feasible and equilibrium payoffs of the teams under the three point systems can be compared.
In this setting, we first find the minimax feasible payoffs in every point system.\footnote{The smallest payoff which the other team can force a team to receive. Formally: $\bar{v}_i = \min_{s_{-i}} \max_{s_i} \mathbf{U}(s_i, s_{-i})$.} This minimax payoff defines a region of equilibrium payoffs. We consider the Nash equilibriums used to reach this minimax payoffs in and take the average joint efforts in each system. Again, we want to find which point system makes the teams spend more effort in order to attain the equilibirum payoffs.\\ 
Formally, let us define a dynamic game as  $G=(\{A,B\}, (W,(0,0)),E^{\ast},T)$, where:
\begin{enumerate}
\item There are again two teams, $A$ and $B$. A generic team will be denoted by $i$.
\item We restrict the choices of actions to a  finite set of joint actions   $E^{\ast}$ where $E^{\ast}=\{(e_{A},e_{B})\in\mathbb{R}^{2}_+\}$ where each $e_{i}$ was used in a Nash equilibrium of the static game. 
\item A finite set of events $W$, each of which represents a class of equivalent pairs of scores of the two teams. 
\begin{itemize}
\item $(a,b)\sim(7,1)$ if $a>7$ and $b=1$\\
$(a,b)\sim(7,2)$ if $a>7$ and $b=2$\\
$(a,b)\sim(7,3)$ if $a>7$ and $b=3$\\
$(a,b)\sim(7,4)$ if $a-b\geq3$ and $b\geq4$\\
$(a,b)\sim(7,5)$ if $a-b\geq2$ and $b\geq5$\\
$(a,b)\sim(7,6)$ if $a-b\geq1$ and $b\geq6$\\
$(a,b)\sim(7,7)$ if $a=b$ and $b\geq7$\\
\end{itemize}

In each case, we say that two scores belong to the same event if the two teams get the same payoffs in both cases in a finite {\bf instantaneous game} in normal form defined as
 $$(a,b)=(\{A,B\},E^{\ast},((^{(a,b)}\!U_{i}^{S})_{i\in \{A,B\}} )$$
\noindent where $S = NB, +4$ or $3+$ and $^{(a,b)}\!U_{i}^{S}$ represents the utility function of team $i$ used in the static model in the instantaneous game in the event $(a,b)$, with the point system $S$. 
\item All the point systems have the same initial event, namely $(0,0)$.
\item A transition function $T$, which specifies the new event as a function of the current event and the joint actions taken by both teams. 
Therefore
$$T:W\times E^{\ast}\longrightarrow W.$$ 
The transition function has only three possible outcomes (we use a representative element, i.e. a pair of scores, for any event in $W$):
\begin{enumerate}
\item $T((a,b),(e_{A},e_{B}))=(a,b)$
\item $T((a,b),(e_{A},e_{B}))=(a+1,b)$
\item $T((a,b),(e_{A},e_{B}))=(a,b+1)$
\end{enumerate}
These outcomes represent the fact that, upon a choice of joint efforts, either no team scores, $A$ scores or $B$ scores, respectively.  
\end{enumerate}  
Some further definitions will be useful in the rest of this work:

\begin{defi} For every $t\in \mathbb{N}$, define $H^{t}$ as $\overbrace{E^{\ast} \times \ldots \times E^{\ast}}^{t \ \mbox{times}}$ i.e. an element $h\in H^{t}$ is a history of joint efforts of length $t$. We denote by $H^{0}=\{e\}$ the set of histories of length $0$, with {\em e} standing for the empty history. 
Let $H=\cup^{\infty}_{t=0}H^{t}$ be the set of all possible histories in $G$. 
\end{defi}

We define recursively a sequence of $t +1$ steps of events starting with $(a,b)$, namely $\{(a,b)^j\}_{j=0}^t$, where $(a,b)^0 = (a,b)$, $\ldots$, $(a,b)^{t-1}= T((a,b), h_{t-1} \setminus h_{t-2})$,  $(a,b)^t=T((a,b)_{t-1}, h_{t} \setminus h_{t-1})$, where $(h_{0},\ldots,h_{t})\in H$ is such that for each $j=0, \ldots, t$,  $h_{j-1}$ is the initial segment of $h_j$ and $h_j \setminus h_{j-1}$ is the event exerted at the $j$-th step. 

\begin{defi}
A strategy of team $i\in\{A,B\}$ in the game $G$ is a function $f_{i}:H\rightarrow E^{\ast}_i$ such that for each $h_{t-1}$, the ensuing $h_t$ is the sequence $(h_{t-1}, (f_A(h_{t-1}), f_B(h_{t-1}))$.\footnote{This assumes {\em perfect monitoring}. That is, that teams decide their actions knowing all the previous actions in the play of the game.} We will denote by $F_{i}$ the set of all these functions for team $i$ and, by extension, we define $F=F_{A}\times F_{B}$.
\end{defi}

Thus, any $f\in F$ defines recursively a sequence of consecutive histories. We also have that $f$ defines a sequence of instantaneous games for each scoring system $S$ defined as $(a,b)^S(f)=\{(a,b)^{S}_{t}(f)\}_{t=0}^{\infty}$, where each game corresponds to an event $(a,b)^S_j$:  $(a,b)^S_{0}=(a,b)$ and for every $t\geq 1$, $(a,b)^S_{t} =T((a,b)^S_{t-1}, f(h_t))$. 

\begin{defi} A joint strategy $f = (f_A, f_B) \in F$ is {\em stationary} if for every $h,h^{\prime}\in H$ such $h = h_t$ and $h^{\prime} = h_{t^{\prime}}$ and the event generated by both is the same, namely $(\bar{a},\bar{b})$, we have that $f(h)=f(h^{\prime})$.
\end{defi}

That is, a stationary strategy only depends on the event at which it is applied, and not on how the event was reached.\\
We note the set of stationary strategies as $\mathcal{S} \subseteq F$ and in what follows we only consider strategies drawn from this set. In other words, we assume that teams act disregarding how a stage of the game was reached and play only according to the current state of affairs. For example, if the match at $t$ is tied $(3,3)$, teams $A$ and $B$ will play in the same way, irrespectively of whether the  score before was $(3,0)$ or $(0,3)$.\\ 
It can be argued that the assumption of stationarity does not seem to hold in some real-world cases since the way a given  score is reached may take an emotional toll on teams. If, say,  $A$ is winning at $(4,0)$, and suddenly the score becomes $(4,4)$, the evidence shows that $A$'s players will feel disappointed and anxious, changing the incentives under which they act (Cresswell \& Eklund, 2006).\\
But the theoretical assumption of stationarity is clearly applicable to the case of matches between high performance teams. For  instance, consider the first round of the Rugby World Cup 2015, when All Blacks (New Zealand's national team), the best team of the world, played against Los Pumas (Argentina's team), an irregular team. At the start of the second half Los Pumas were 4 points ahead. All Blacks, arguably the best rugby team of the world (and one of the best in any sport (Conkey, 2017)), instead of losing temper kept playing in a ``relaxed'' mode. This ensured that they ended winning the game by 26 to 16 (Cleary, 2015). So, the assumption of stationarity seems acceptable for high performance teams, reflecting their mental strength.\\
We have that,
\begin{lemi}[Mass\'o-Neme (1996)]
Let $s\in \mathcal{S}$. There exist two natural numbers $M,R\in \mathbb{N}$ such that $(a,b)^{t+R}(s)=(a,b)^{t+R+M}(s)$ for every $t\geq 1$. That is, a stationary strategy ({\em every} strategy in our framework)  produces a finite cycle of instantaneous games of length $M$ after $R$ periods.
\end{lemi}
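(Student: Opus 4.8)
The plan is to recognize the assertion as a purely combinatorial statement about a deterministic dynamical system on a finite set and to prove it by a pigeonhole argument. The only structural facts needed are that $W$ is finite (item~3 in the definition of $G$), that $T$ is a function of the current event and the chosen joint effort, and that $s$ is stationary.

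First I would fix $s\in\mathcal{S}$ and define the induced \emph{event map} $\psi:W\to W$ by $\psi(w)=T(w,s(w))$. This is well defined precisely because stationarity forces $s$ to depend only on the current event, while $T$ is a genuine function on $W\times E^{\ast}$. Next I would verify, by induction on $t$ using the recursive definition of $(a,b)^{S}_{t}$ together with the definition of a stationary strategy, that the sequence of events generated by $s$ from the initial event satisfies $(a,b)^{t}(s)=\psi^{t}\big((a,b)\big)$ for every $t\ge 0$, equivalently $(a,b)^{t}(s)=\psi\big((a,b)^{t-1}(s)\big)$. This reduction to a self-map of the finite set $W$ is the one place where stationarity is essential: without it the next event would in principle depend on the whole history, and the argument below would break down.

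Then I would apply the pigeonhole principle. Let $N=|W|<\infty$ and consider the $N+1$ events $(a,b)^{0}(s),(a,b)^{1}(s),\dots,(a,b)^{N}(s)$; two of them must coincide, say $(a,b)^{p}(s)=(a,b)^{q}(s)$ with $0\le p<q\le N$. Put $M=q-p\ge 1$ and $R=\max\{p,1\}$. Applying $\psi$ repeatedly to the equality $(a,b)^{p}(s)=(a,b)^{q}(s)$ yields $(a,b)^{p+k}(s)=(a,b)^{q+k}(s)$ for every $k\ge 0$, i.e. $(a,b)^{t}(s)=(a,b)^{t+M}(s)$ for all $t\ge p$, and in particular $(a,b)^{t+R}(s)=(a,b)^{t+R+M}(s)$ for every $t\ge 1$. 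Since each event determines its instantaneous game (two scores lie in the same event exactly when the induced normal-form games coincide), the sequence of instantaneous games repeats with period $M$ from period $R$ onward, which is the claimed finite cycle.

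I do not expect any serious obstacle: the mathematical content is entirely the passage from ``stationary strategy on a finite event set'' to ``iteration of a self-map of a finite set'', after which finiteness forces eventual periodicity. The only points requiring a little care are the well-definedness of $\psi$ (hence the need to invoke stationarity and to read $T$ as a function, not merely as a correspondence listing the three possible successors) and the index bookkeeping so that the statement is phrased for $t\ge 1$ as in the lemma. If one prefers to keep the ``who scores'' outcome explicit in $T$, the same argument runs along any fixed realization, and one may additionally observe that, since the underlying scores are non-decreasing, the trajectory is eventually trapped in an absorbing subfamily of events; under the paper's convention that $T$ is a function this refinement is unnecessary.
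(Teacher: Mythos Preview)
Your argument is correct: reducing the stationary strategy to a self-map $\psi$ of the finite event set $W$ and then invoking pigeonhole is exactly the right (and standard) way to obtain eventual periodicity, and your index bookkeeping is fine. Note, however, that the paper does not actually give a proof of this lemma; it is simply quoted as a result from Mass\'o--Neme (1996), so there is nothing in the paper to compare your approach against beyond observing that your proof is the natural one underlying the cited result.
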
 
For every $s\in \mathcal{S}$ and a scoring system $S$ we define $b^S_{(a,b)}(s)=\{(a,b)^S_{1}(s),\ldots,$  $\ldots (a,b)^S_{R}(s)\}$ and $c^S_{(a,b)}(s)=\{(a,b)^S_{R+1}(s),\ldots, (a,b)^S_{R+M}(s)\}$ as the initial path and the cycle of instantaneous games generated by $s$, where $R$ and $M$ are the smallest numbers of Lemma 1. There are many ways in which the games in a cycle can be reached from the outcomes of another one:
\begin{defi}
Consider $s^{l},s^{{l}^{\prime}}\in \mathcal{S}$ and $(a,b)$ an initial event under a point system $S$. 
\begin{enumerate}
\item We say that $s^{l}$ and $s^{{l}^{\prime}}$ are directly connected, denoted $s^{l}\sim s^{{l}^{\prime}}$, if \mbox{$c_{(a,b)}^S (s^{l})\cap c_{(a,b)}^S(s^{{l}^{\prime}})\neq \emptyset$}.
\item We say that $s^{l}$ and $s^{{l}^{\prime}}$ are connected, denoted $s^{l}\approx s^{{l}^{\prime}}$, if there exist \mbox{$s^{1},\ldots, s^{m}\in \mathcal{S}$} such that $s^{l}\sim s^{1}\sim \ldots \sim s^{m}\sim s^{{l}^{\prime}}$.\footnote{In words, two strategies are directly connected if from the cycle of instaneous games corresponding to one of them, teams have direct access to the cycle of instantaneous games of the other and vice versa. If instead, they are (not directly) connected, teams can access from one of the cycles to the instantaneous games of the other one through a sequence of stationary strategies.}
\end{enumerate}
\end{defi} 
Then, we have (for simplicity we assume an initial event $(a,b)$ and a scoring system $S$):
\begin{defi}
For every $i\in\{A,B\}$, $\mathbf{U}_{i}(s)=(\frac{1}{|s|})\Sigma_{r=1}^{M}U_{i}^{j(r)}((e_{A},e_{B})^{R+r}(s))$, where $U^{j(r)}_i$ is $i$'s payoff function in the instantaneous game  $(a,b)^{R+r}(s)$ and $(e_{A},e_{B})^{R+r}$ is the profile of choices in that game.
\end{defi}
This means that the payoff of a stationary strategy is obtained as the average of the payoffs of the cycle. To apply this result in our setting, we have to characterize the set of feasible payoffs of $G$:
\begin{defi}
A vector $v\in\mathbb{R}^{2}$ is {\em feasible} if there exists a strategy $s\in \mathcal{S}$ such that $v=(\mathbf{U}_A(s), \mathbf{U}_B(s))$.
\end{defi} 
We have that:
\begin{teo}[Mass\'o-Neme (1996)]
A vector $v\in\mathbb{R}^{2}$ is feasible if and only if there exists $\mathcal{S}(v)=\{s^{1},\ldots,s^{\overline{k}}\}\subseteq \mathcal{S}$ such that for every $s^{r}, s^{r'}\in \mathcal{S}(v)$, $s^{r}\approx s^{r'}$ and there exists $(\alpha^{1},\ldots,\alpha^{\overline{k}})\in \overline{\Delta}$ (the $\overline{k}$-dimensional unit simplex) such that
$$v=\Sigma_{k=1}^{\overline{k}}\alpha^{k} (\mathbf{U}_A(s^k), \mathbf{U}_B(s^k)).$$
\end{teo}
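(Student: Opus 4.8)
The plan is to follow Mass\'o--Neme (1996), basing the whole argument on the eventual cyclicity granted by Lemma 1 and treating the two implications separately. For the ``only if'' direction: if feasibility is read strictly in the sense of the definition given above --- that $v$ is the average payoff $(\mathbf{U}_A(s),\mathbf{U}_B(s))$ of a single stationary strategy $s$ --- the implication is immediate by taking $\mathcal{S}(v)=\{s\}$, $\bar k=1$, $\alpha^1=1$, since $s\approx s$; so the real content of the statement lies in the converse. More generally, starting from a play whose long-run average of instantaneous payoffs is $v$, I would use Lemma 1 to note that each stationary strategy describing a recurring portion of the play settles, after a bounded initial path, into a finite cycle of instantaneous games, so its contribution to $v$ is the average over that cycle. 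Letting $\mathcal{S}(v)=\{s^1,\dots,s^{\bar k}\}$ collect those stationary strategies whose cycles recur with strictly positive asymptotic frequency $\alpha^k$, I would observe that, as all these cycles occur inside one play, each is reachable from any other in finitely many events, i.e. $s^r\approx s^{r'}$; and because $W$ and $E^\ast$ are finite, the periods spent on initial paths and on transitions between cycles are bounded, hence negligible in the time average. This yields $v=\sum_{k=1}^{\bar k}\alpha^k(\mathbf{U}_A(s^k),\mathbf{U}_B(s^k))$ with $(\alpha^1,\dots,\alpha^{\bar k})\in\overline{\Delta}$.

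For the ``if'' direction, which is the substantive one, I would take the given connected strategies $s^1,\dots,s^{\bar k}$ (discarding those with $\alpha^k=0$) together with $\alpha\in\overline{\Delta}$, fix a large integer $N$ and integers $N_k$ with $N_k/N\to\alpha^k$, and build explicitly a play that runs through the cycle of $s^1$ for $N_1$ rounds, switches to the cycle of $s^2$ for $N_2$ rounds, and so on cyclically. The relation $\approx$ supplies, between any two consecutive chosen cycles, a chain of directly connected stationary strategies whose successive cycles share an instantaneous game (by the definition of $\sim$); this lets the play move from a representative event of one cycle to a representative event of the next in a number of stages bounded by a constant depending only on $|W|$ and $|\mathcal{S}|$, and likewise lets it reach the cycle of $s^1$ from the initial event. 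As $N\to\infty$ the transition segments contribute $O(1/N)$ to the time average, so the average payoff of the constructed play converges to $\sum_{k=1}^{\bar k}\alpha^k(\mathbf{U}_A(s^k),\mathbf{U}_B(s^k))=v$; a routine limiting argument then delivers the exact value, so $v$ is feasible.

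I expect the main obstacle to be the ``if'' direction and, within it, the bookkeeping around connectedness: one must verify that each $\sim$-step furnishes legitimate transitions inside $E^\ast$ of uniformly bounded length --- which is exactly where the finiteness of $W$ and the cyclicity from Lemma 1 do the work --- and then control the time average so that those transition segments vanish asymptotically while the ratios $N_k/N$ reproduce the prescribed weights. The complementary facts --- that connectedness is also necessary, since two cycles lying in distinct $\approx$-classes cannot both recur with positive frequency along one play, and that arbitrary weights in $\overline{\Delta}$ follow from rational ones by passing to the limit --- I would treat as routine.
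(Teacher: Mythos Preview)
The paper does not give a proof of this theorem: it is stated with the attribution ``[Mass\'o--Neme (1996)]'' and used as an imported result, so there is no in-paper argument to compare your proposal against. Your sketch is a faithful reconstruction of the standard Mass\'o--Neme line --- eventual cyclicity from Lemma~1, decomposition of a play into recurring cycles with asymptotic frequencies for the forward direction, and an explicit interleaving of connected cycles with vanishing transition cost for the converse.

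One substantive remark. You correctly flag a tension in the paper's Definition~6, which declares $v$ feasible when $v=(\mathbf{U}_A(s),\mathbf{U}_B(s))$ for a \emph{single stationary} $s\in\mathcal{S}$. Under that literal reading the ``only if'' direction is indeed trivial with $\bar k=1$, but the ``if'' direction would be false: a genuine convex combination of payoffs from two distinct connected cycles is typically not the average over any single stationary cycle, and the interleaved play you build is not stationary (its action at a given event depends on which block of length $N_k$ the clock is in). In Mass\'o--Neme (1996) feasibility is defined for general strategies $f\in F$, and the theorem characterises that larger set via connected stationary strategies; your construction is aimed at, and is correct for, that broader notion. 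So your proof is sound against the intended statement, but if one insists on the paper's narrower Definition~6 the equivalence simply does not hold, and no amount of bookkeeping will repair it --- the fix is to read ``feasible'' as in the original reference.
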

The definition of Nash equilibria in this game is the usual one:
\begin{defi}
A strategy $s^{\ast}\in \mathcal{S}$ is a Nash equilibrium of game $G$ if for all $i\in \{A,B\}$, $\mathbf{U}_{i}(s^{\ast})\geq \mathbf{U_{i}}(s^{\prime})$ for all $s^{\prime}  \in \mathcal{S}$, with $s^{\ast}_i \neq s^{\prime}_i$ while  $s^{\ast}_{-i} = s^{\prime}_{-i}$.
A vector $v\in\mathbb{R}^{2}$ is an equilibrium payoff of $G$ if there exists a Nash equilibrium of $G$, $s\in \mathcal{S}$, such that $(\mathbf{U}_A(s), \mathbf{U}_B(s)) = v$.
\end{defi}
The following result characterizes the equilibrium payoffs of $G$:
\begin{teo}[Mass\'o-Neme (1996)]
Let $v$ be a feasible payoff of $G$. Then $v$ is an equilibrium payoff if and only if there exist $s^{1},s^{2},s^{3}\in \mathcal{S}$, and $(\alpha^{1},\alpha^{2},\alpha^{3})\in \Delta^{3}$ such that $v=\Sigma_{k=1}^{3}\alpha^{k}(\mathbf{U}_A(s^k), \mathbf{U}_B(s^k))$ and the payoff $v_{i}$ is better or equal than the higher payoff that team $i$ can guarantee by itself through a deviation of the cycles of $s^{1},s^{2},s^{3}$, the connected cycles and the initial path from those strategies.
\end{teo}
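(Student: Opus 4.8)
The plan is to prove the two implications separately, in each case using the feasibility characterization (the preceding Theorem), the cyclic structure of Lemma 1, and Carath\'eodory's theorem in the plane. The common observation is that, since strategies are stationary and both the event set $W$ and the joint-action set $E^{\ast}$ are finite, every play of $G$ --- on or off an equilibrium path --- decomposes after a finite initial segment into a periodic list of instantaneous games, so that every relevant payoff is an average over a cycle, and every cycle reachable by a deviating team is reachable by following the cycles and initial paths of finitely many stationary strategies together with the strategies connected to them.

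\emph{Necessity.} Let $v$ be an equilibrium payoff, realized by a Nash equilibrium $s^{\ast}\in\mathcal{S}$. By definition $v$ is feasible, so the feasibility theorem yields a family $\{s^{1},\dots,s^{\overline{k}}\}\subseteq\mathcal{S}$ that is pairwise $\approx$-connected and weights $(\alpha^{1},\dots,\alpha^{\overline{k}})\in\overline{\Delta}$ with $v=\Sigma_{k}\alpha^{k}(\mathbf{U}_A(s^k),\mathbf{U}_B(s^k))$. Since $v$ lies in the convex hull of finitely many points of $\mathbb{R}^{2}$, Carath\'eodory's theorem lets us discard all but three of the $s^{k}$, relabel them $s^{1},s^{2},s^{3}$, and take $(\alpha^{1},\alpha^{2},\alpha^{3})\in\Delta^{3}$ (zero weights allowed). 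For the incentive inequality, fix a team $i$ and any $s'_{i}$ with $s'_{i}\neq s^{\ast}_{i}$: the play under $(s'_{i},s^{\ast}_{-i})$ is again eventually cyclic, and by tracing the transition function $T$ one checks that the events it visits --- hence the cycle it settles into --- lie among those traversed by the cycles of $s^{1},s^{2},s^{3}$, the cycles connected to them, and their initial paths. Therefore $\mathbf{U}_i(s'_{i},s^{\ast}_{-i})$ is bounded above by the best payoff team $i$ can guarantee over that collection; since $s^{\ast}$ is a Nash equilibrium, $v_i=\mathbf{U}_i(s^{\ast})\geq\mathbf{U}_i(s'_{i},s^{\ast}_{-i})$, which is the stated bound.

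\emph{Sufficiency.} Conversely, assume $v=\Sigma_{k=1}^{3}\alpha^{k}(\mathbf{U}_A(s^k),\mathbf{U}_B(s^k))$ with $(\alpha^1,\alpha^2,\alpha^3)\in\Delta^3$ and the $s^{k}$ pairwise connected, and suppose the incentive inequality holds for both teams. The feasibility theorem already produces a stationary strategy $s^{\ast}\in\mathcal{S}$ with $(\mathbf{U}_A(s^{\ast}),\mathbf{U}_B(s^{\ast}))=v$, obtained by threading, inside one long cycle, repetitions of the cycle of $s^{1}$ (in proportion $\alpha^{1}$), a connecting path into the cycle of $s^{2}$, repetitions of that cycle, and so on, closing the loop; rational approximation of the $\alpha^{k}$ together with the freedom to lengthen cycles makes the cycle average exactly $v$. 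It remains to verify that $s^{\ast}$ is a Nash equilibrium. A unilateral deviation by team $i$ alters the generated sequence of events, but --- exactly as in the necessity argument, since the opponent plays the stationary rule $s^{\ast}_{-i}$ --- the play is driven into one of the cycles of $s^{1},s^{2},s^{3}$, a connected cycle, or an initial path of those strategies; by hypothesis each of these yields team $i$ at most $v_i$, so no deviation is profitable and $v$ is an equilibrium payoff.

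\emph{Main obstacle.} The delicate point, in both directions, is the reachability lemma implicit above: that the events (and hence the cycles) accessible to a team deviating unilaterally against a stationary opponent are exactly those named in the statement. This rests on the particular structure of $W$ and $T$ in our model --- the score coordinates are non-decreasing and change by at most one per period --- which both confines the reachable set and guarantees, in the sufficiency construction, that the blocks corresponding to different $s^{k}$ can be laid out on $W$ without ever forcing two conflicting actions at the same event, so that the composite strategy is genuinely stationary. Making this bookkeeping precise, rather than the convexity arguments, is where the work lies; the remainder follows the argumentation line of Mass\'o-Neme (1996).
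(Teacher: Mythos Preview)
The paper does not prove this theorem: it is quoted as a result of Mass\'o--Neme (1996), followed only by a one-sentence paraphrase (``In words: $v$ is an equilibrium payoff if each team gets a better pay than the pay they can get if they deviate'') before the authors move on to apply it. There is therefore no in-paper argument to compare your proposal against; the proof lives entirely in the cited reference.

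On your sketch itself, two points deserve more than the ``bookkeeping'' label you give them. In the necessity direction, Carath\'eodory applied to the feasibility decomposition does produce three connected strategies whose payoffs convex-combine to $v$, but nothing in that selection ties $s^{1},s^{2},s^{3}$ to the equilibrium $s^{\ast}$: the assertion that a unilateral deviation from $s^{\ast}$ must land in cycles connected to \emph{these particular} $s^{k}$ does not follow from how they were chosen. One really wants to take the $s^{k}$ so that $s^{\ast}$ (or its cycle) is among them or connected to them, after which the deviation bound is immediate from the Nash property. In the sufficiency direction, the ``threading'' construction you describe --- repeat the cycle of $s^{1}$ in proportion $\alpha^{1}$, transit, repeat the cycle of $s^{2}$, etc.\ --- generally yields a strategy that is \emph{not} stationary in the sense of Definition~3, since the prescribed action at a recurring event depends on how many laps have been completed, not only on the event itself; and ``rational approximation'' of the $\alpha^{k}$ gives approximate, not exact, averages. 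Mass\'o--Neme resolve both issues with a more careful construction than the one you outline, so the obstacle you flag is a genuine gap rather than residual detail.
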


In words: $v$ is an equilibrium payoff if each team gets a better pay than the pay they can get if they deviate.\\
Finally, with all these definitions and theorems at hand, we can analyze the game where two teams that are far away in the position table face each other, so we set $\epsilon=0$ in the utility functions, disregarding the importance of blocking the attacks of the other team and focusing on scoring tries.
            

The set of feasible payoffs of this game is given, as said, by a convex combination of payoffs of stationary strategies. The feasible payoffs region corresponding to each point system are represented in Figures 2 to 4. In each figure we can also see the minimax payoffs for the cycles favoring team $A$.\footnote{Where $c(a,b)$ is the cycle $\{(a,b)\}$ and $c(a,b)(c,d)$ is the cycle $\{(a,b),(c,d)\}$.} Every feasible payoff above and to the right of the minimax payoff is an equilibrium payoff. \\
Figure 2 shows the results for the $NB$ system and Table 5 shows the cycles that receive that minimax payoffs. 
\begin{figure}[H]
	\includegraphics[scale=1.2]{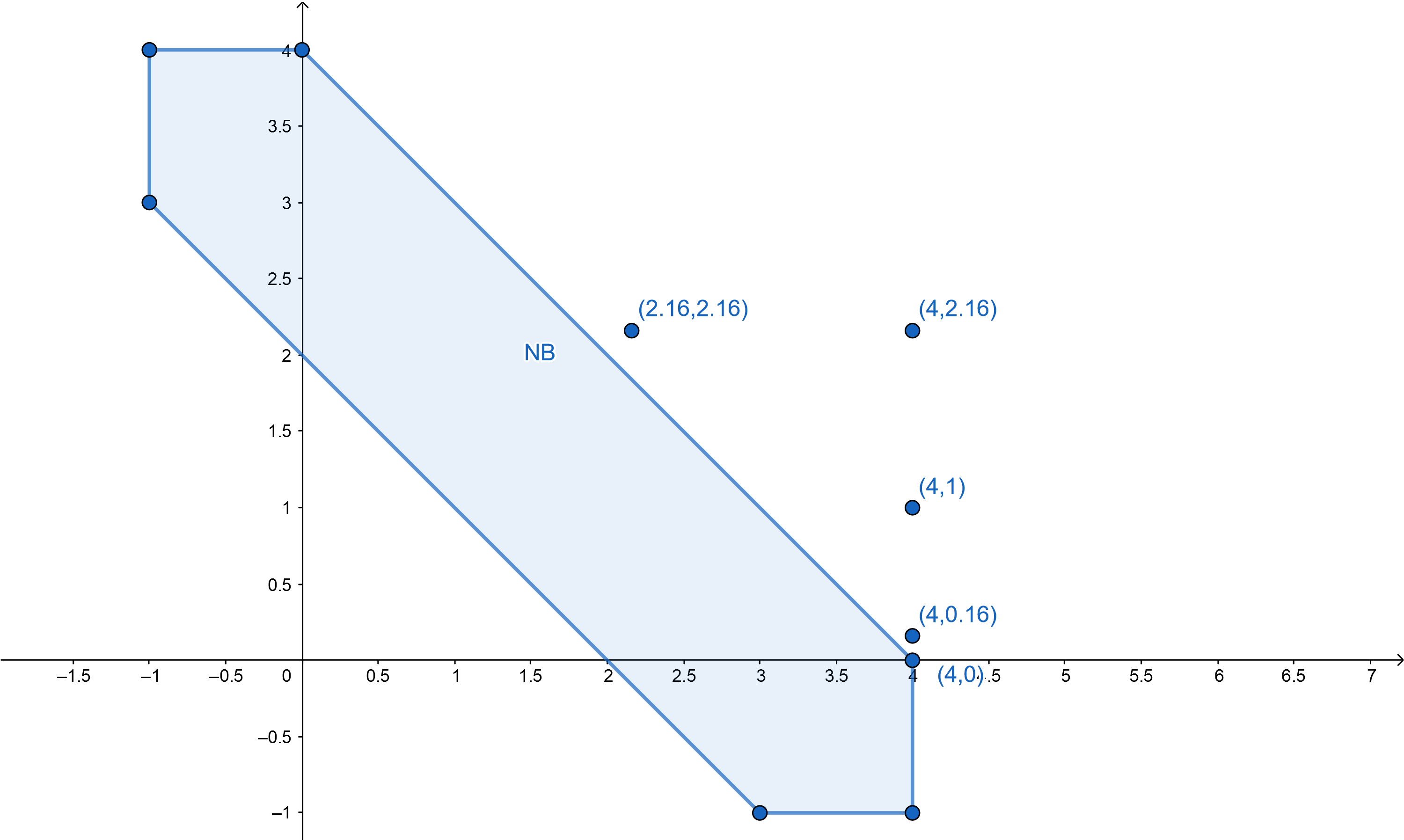}
	\caption{Feasible and Minimax payoffs in $NB$}
\end{figure} 
\begin{table}[H]
\centering
\begin{tabular}{|c |c|}
\hline
Minimax payoff & Cycle\\
\hline
$(2.16,2.16)$ & $c(0,0),c(1,1),c(2,2),c(3,3),c(4,4),c(5,5),c(6,6),c(7,7)$\\
$(4,1)$ & $c(1,0),c(2,1),c(3,2),c(4,3),c(5,4),c(5,5),c(7,6),c(7,5)(7,6)$\\
$(4,0.16)$ & $c(2,0),c(3,1),c(4,2),c(5,3),c(6,4),c(7,5),c(7,4)(7,5)$\\
$(4,0)$ & $c(3,0),c(4,1),c(5,2),c(6,3),c(7,4)$\\
&$c(4,0),c(5,1),c(6,2),c(7,3),c(5,0),c(6,1),c(7,2)c(6,0),c(7,1),c(7,0)$\\
$(4,2.16)$ & $c(7,6)(7,7)$\\
\hline
\end{tabular}
\caption{NB System}
\end{table}
Figures 3 and Table 6 show the results for the $3+$ system.
\begin{figure}[H]
  \includegraphics[scale=1.2]{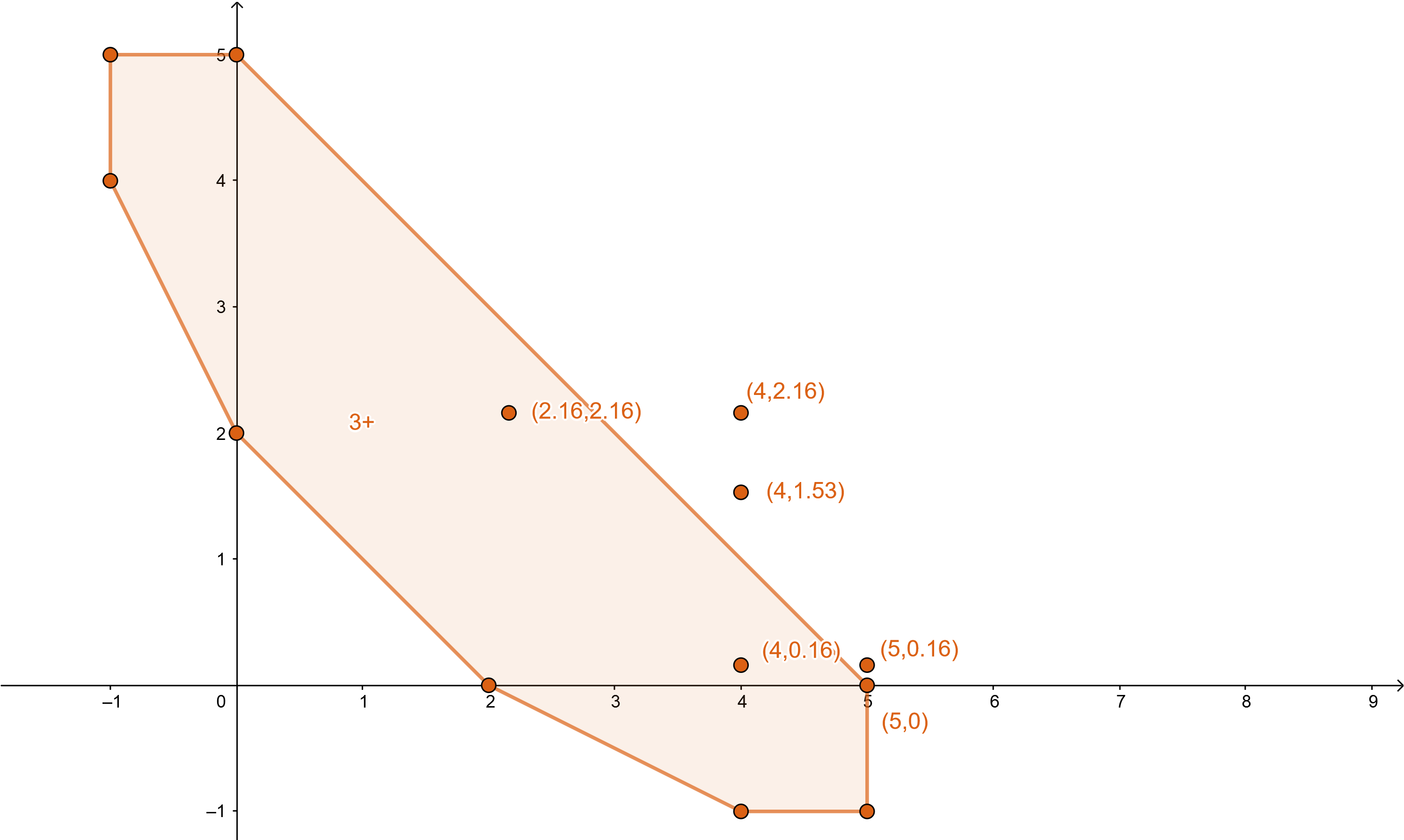}
  \caption{Feasible payoffs and Minimax payoffs in $3+$}
\end{figure}
\begin{table}[H]
\centering
\begin{tabular}{|c |c|}
\hline
Minimax payoff & Cycle\\
\hline
$(2.16,2.16)$ & $c(0,0),c(1,1),c(2,2),c(3,3),c(4,4),c(5,5),c(6,6),c(7,7)$\\
$(4,1.53)$ & $c(1,0),c(2,1),c(3,2),c(4,3),c(5,4),c(5,5),c(7,6),c(7,5)(7,6)$\\
$(4,0.16)$ & $c(2,0),c(3,1),c(4,2),c(5,3),c(6,4),c(7,5)$\\
$(5,0)$ & $c(3,0),c(4,1),c(5,2),c(6,3),c(7,4)$\\
&$c(4,0),c(5,1),c(6,2),c(7,3),c(5,0),c(6,1),c(7,2)c(6,0),c(7,1),c(7,0)$\\
$(4,2.16)$ & $c(7,6)(7,7)$\\
$(5,0.16)$ & $c(7,4)(7,5)$\\
\hline
\end{tabular}
\caption{3+ System}
\end{table}
Finally, Figure 4 and Table 7 do the same for the $+4$ system.
\begin{figure}[H]
  \includegraphics[scale=1.2]{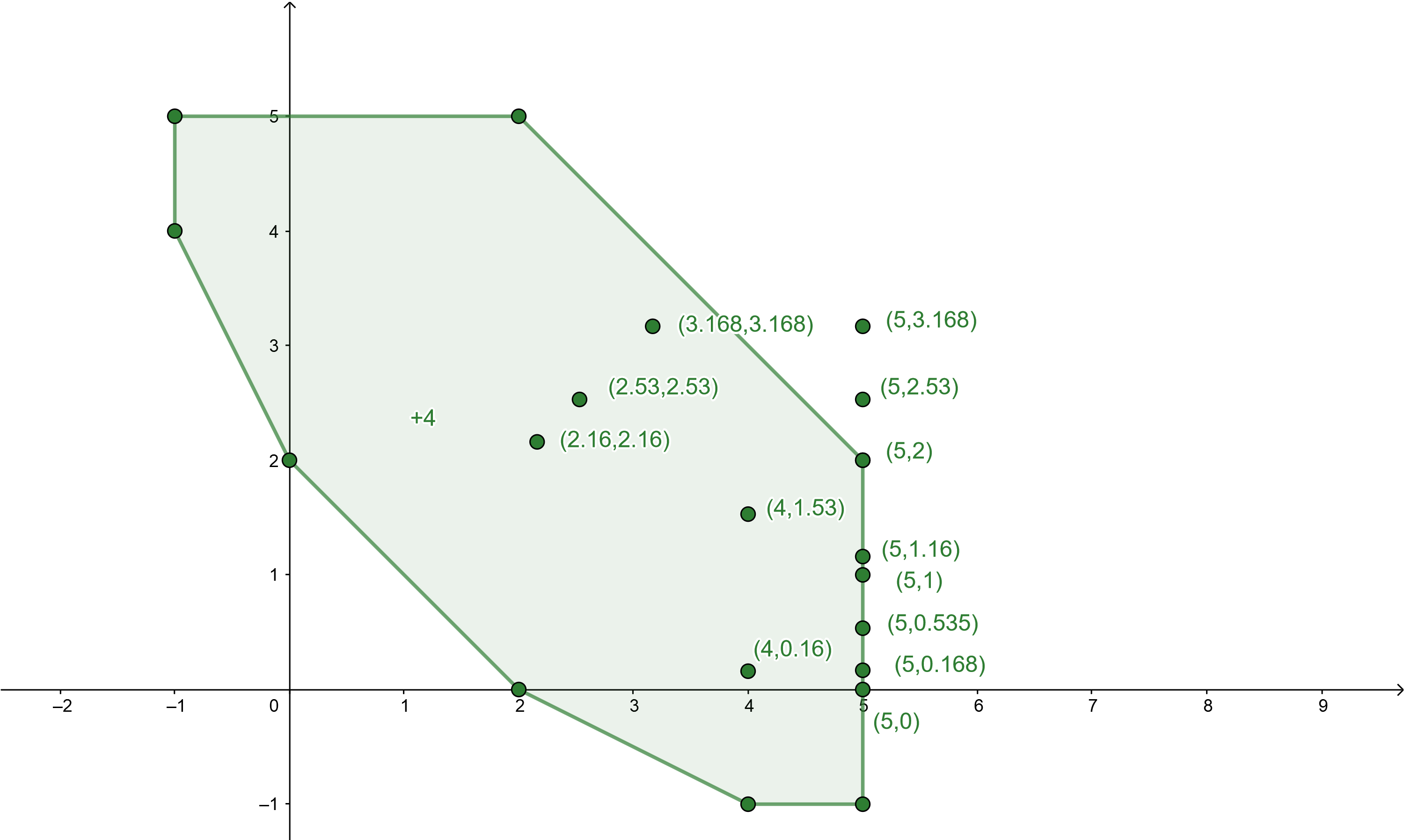}
  \caption{Feasible payoffs and Minimax payoffs in $+4$}
\end{figure} 
\begin{table}[H]
\centering
\begin{tabular}{|c |c|}
\hline
Minimax payoff & Cycle\\
\hline
$(2.16,2.16)$ & $c(0,0),c(1,1)$\\
$(2.53,2.53)$ & $c(2,2)$\\
$(3.168,3.168)$ & $c(3,3),c(4,4),c(5,5),c(6,6),c(7,7)$\\
$(4,1.53)$ & $c(1,0),c(2,1)$\\
$(5,2)$ & $c(3,2)$\\
$(5,2.53)$ &$c(4,3),c(5,4),c(6,5),c(7,6),c(7,5)(7,6)$\\
$(4,0.16)$ & $c(2,0)$\\
$(5,0.168)$ & $c(3,1),c(5,2)$\\
$(5,0.535)$ & $c(4,2)$\\
$(5,1.16)$ &$c(5,3),c(6,4),c(7,5),c(7,4)(7,5)$\\
$(5,0)$ &$c(3,0),c(4,1),c(4,0),c(5,1),c(6,2),c(5,0)$\\&$c(6,1),c(7,2),c(6,0),c(7,1),c(7,0)$\\
$(5,1)$ &$c(6,3),c(7,4),c(7,3)$\\
$(5,3.168)$ & $c(7,6)(7,7)$\\

\hline
\end{tabular}
\caption{+4 System}
\end{table}

The fact that some minimax payoffs are outside the feasible region indicates that some cycles do not have equilibrium payoffs, so one or both of the teams have incentives to change strategies and get a better payoff.
When we consider the joint efforts that yield the minimax payoffs we obtain an average joint effort of $(0,0)$ in the $NB$ system, $(0.18,0.18)$ in the $3+$ system and $(0.1776,0.1776)$ in the $+4$ system.

\newpage
\section{Empirical Evidence}

In order to check the empirical soundness of our theoretical analyses we will use a database of 473 rugby matches. They were played from 1987 to 2015 in different competitions, including the Rugby Word Cup, the Six Nations and club tournaments. We compiled this database drawing data from different sources ([12]-[23]). Each match is represented by a vector with four components, namely the number of tries of the local team, the number of tries of the visiting team, as well as the scores of the winning and the losing team, respectively.  \\
We perform a Least Squares analysis to explain the number of tries of each team and the differences in scores in terms of some explanatory variables. We consider as such the scoring system used in each match (our key variable), the nature of each team (a club team or a national team), a time trend and a constant. The selection of this kind of analysis is justified by, on one hand, its simplicity, but on the other because we lack a panel or temporal structure which could provide a richer information. Notice also that it is natural to posit a linear model in the presence of categorical variables (e.g. the scoring system in a tournament) (Wooldridge, 2020). \\
We run OLS regressions on different variants of the aforementioned general model, changing the way in which explanatory variables are included of changing the sample of matches to be analyzed. In the latter case we divided the entire sample in terms of the homogeneity or heterogeneity of teams playing in each match. In all cases we had to use robust errors estimators to handle the heteroskedasticity of the models. Also, assuming that each tournament is idiosyncratic, we controlled for clustered errors. \\
The general functional form of the model can be stated as:

\begin{equation}
\label{eqn:model}
T_i=\beta_0 + \beta_1 C2_i + \beta_2 C3_i + \gamma X_i + \epsilon_i
\end{equation}

There are many alternative ways of characterizing the dependent variable, which represents the number of tries in a match $i$, i.e. $T_i$. The first and obvious choice is to define it as the total number of tries in a match. But we also analyze variants in which we allow $T_i$ to represent either the number of tries of the local team, of the visitor team, the difference between them, those of the winning team (be it as a local or visiting team) and those of the losing team. \\
With respect of our variable of interest, i.e. the system of bonus points, we specify $+4$  as the categorical base, to compare it to the $3+$ and no bonus systems, represented by means of dummy variables, denoted $C2$ and $C3$ for $NB$ and $3+$ respectively. Both $\gamma$ and $X$ are vectors, containing the control variables and their parameters. We will vary the composition of $X$ in order to check the robustness of the effects of the scoring systems. Finally, $\beta_0$ is the constant, while $\epsilon$ is the error term (specified to account for heterokedasticity or clustered errors) \\
We will first present the descriptive statistics of the database. Then we give the results of the regressions on the different models built by varying both the definition of the dependent and the explanatory variables. Finally, we divide the sample in the classes of matches played by homogeneous or heterogeneous rivals, to compare their results for the same model. 
\subsection{Descriptive Statistics}

Figure~\ref{tries} illustrates different aspects of the distribution of the number of tries in the database of matches. Notice that the number of matches is not the same under the three scoring methods: for $+4$ we have $260$, $93$ under $NB$ and $120$ under $3+$. Nevertheless the evidence indicates that the $3+$ scoring method yields the highest scores, hinting that it is the one that induces a more aggressive play. \\
\begin{figure}[hbt!]
	\centering
	\begin{subfigure}[t]{0.75\textwidth}
		\centering
		\scriptsize
		\includegraphics[width=1\textwidth]{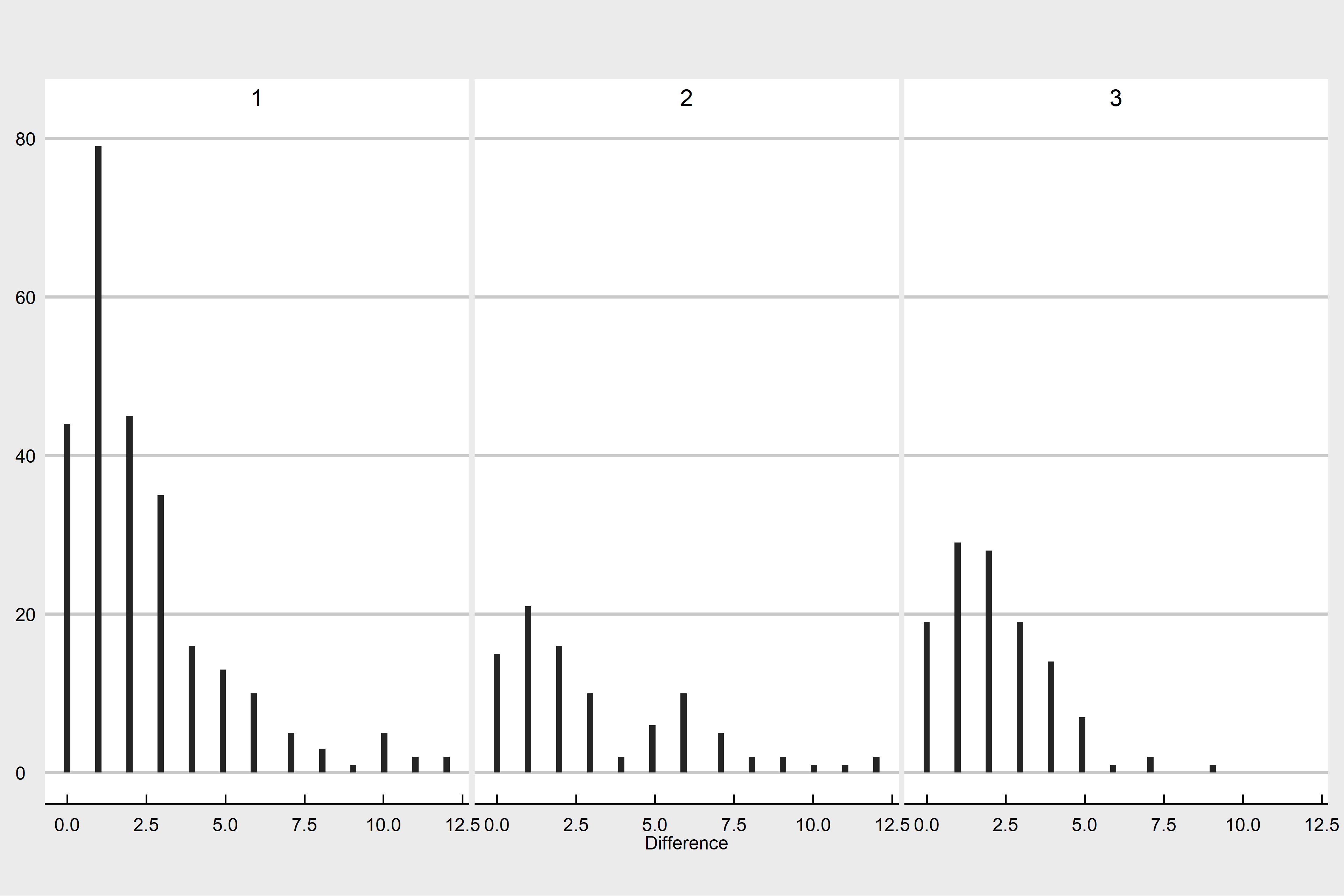}
		\caption{Differences of tries \label{fig:diff}}

	\end{subfigure}%
	 
	\begin{subfigure}[t]{0.75\textwidth}
		\centering
		\scriptsize
		\includegraphics[width=1\textwidth]{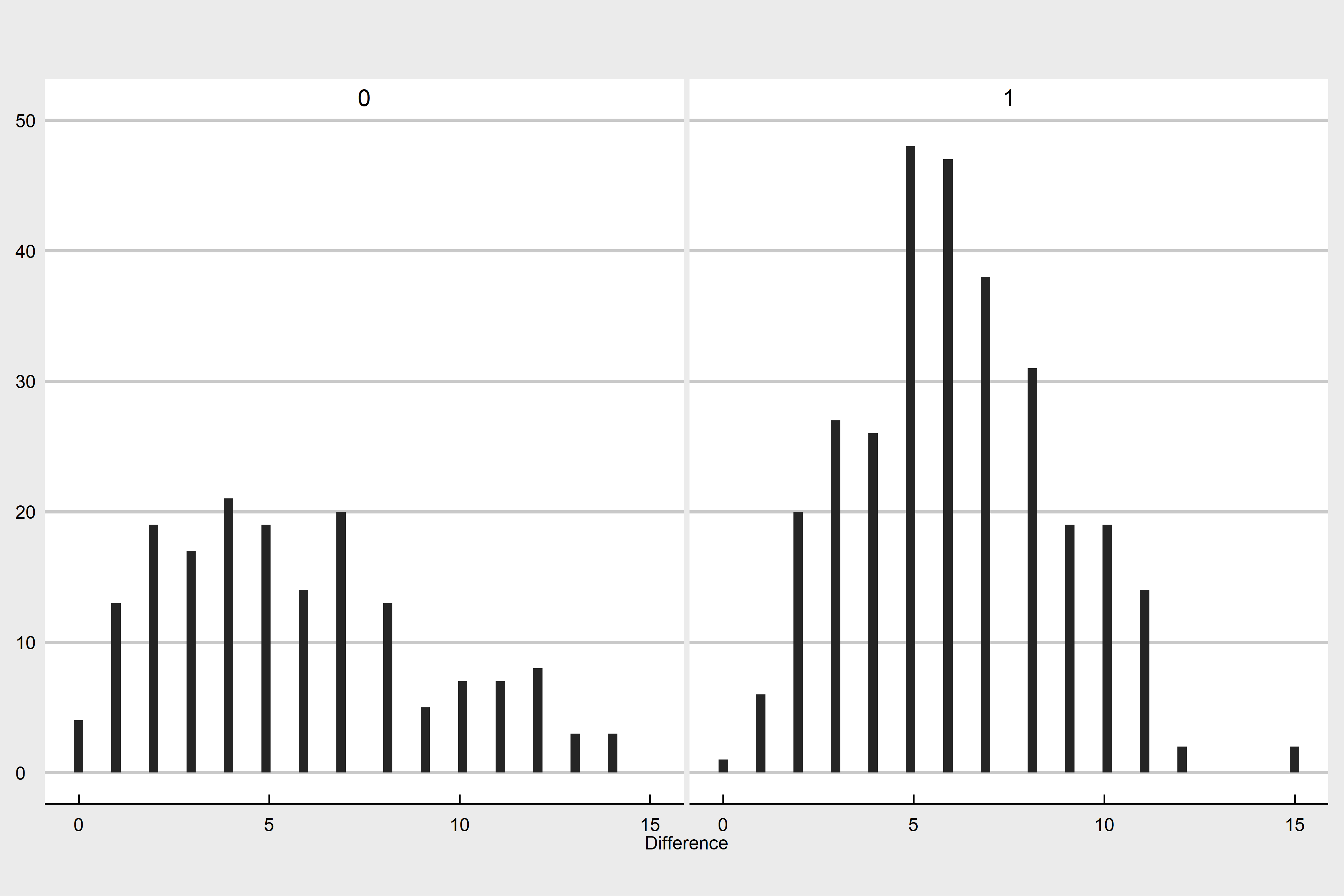}
		\caption{Differences for club teams \label{fig:diffclub}}
	\end{subfigure}

\caption{Histograms of distributions of differences of tries in each match.}
	
\label{diff}
\end{figure}
\begin{figure}[hbt!]
	\centering	
	\begin{subfigure}[t]{0.75\textwidth}
		\centering
		\scriptsize
		\includegraphics[width=1\textwidth]{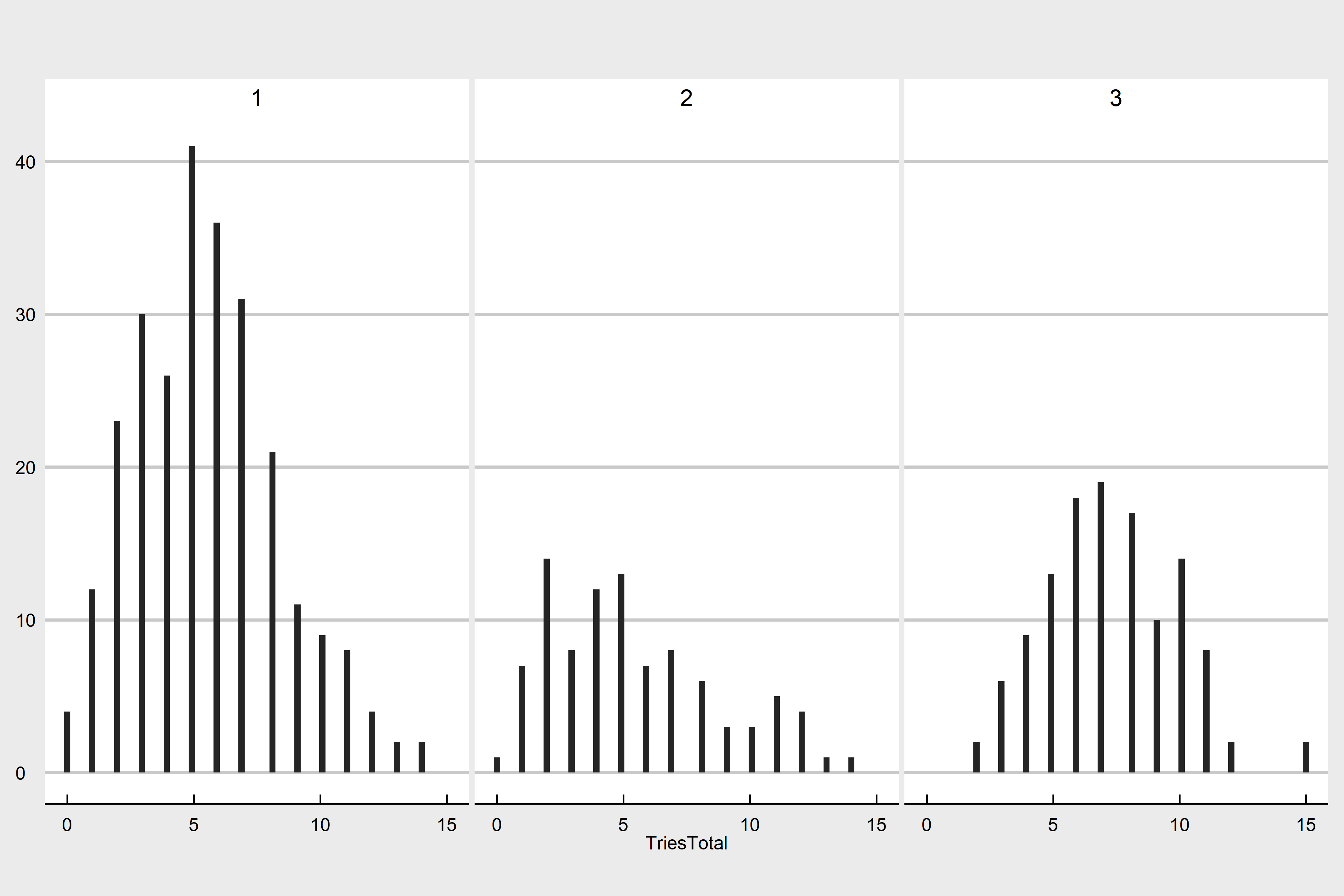}
		\caption{Total tries \label{fig:TriesTotal}}
	\end{subfigure}
	 	
	\begin{subfigure}[t]{0.75\textwidth}
		\centering
		\scriptsize
		\includegraphics[width=1\textwidth]{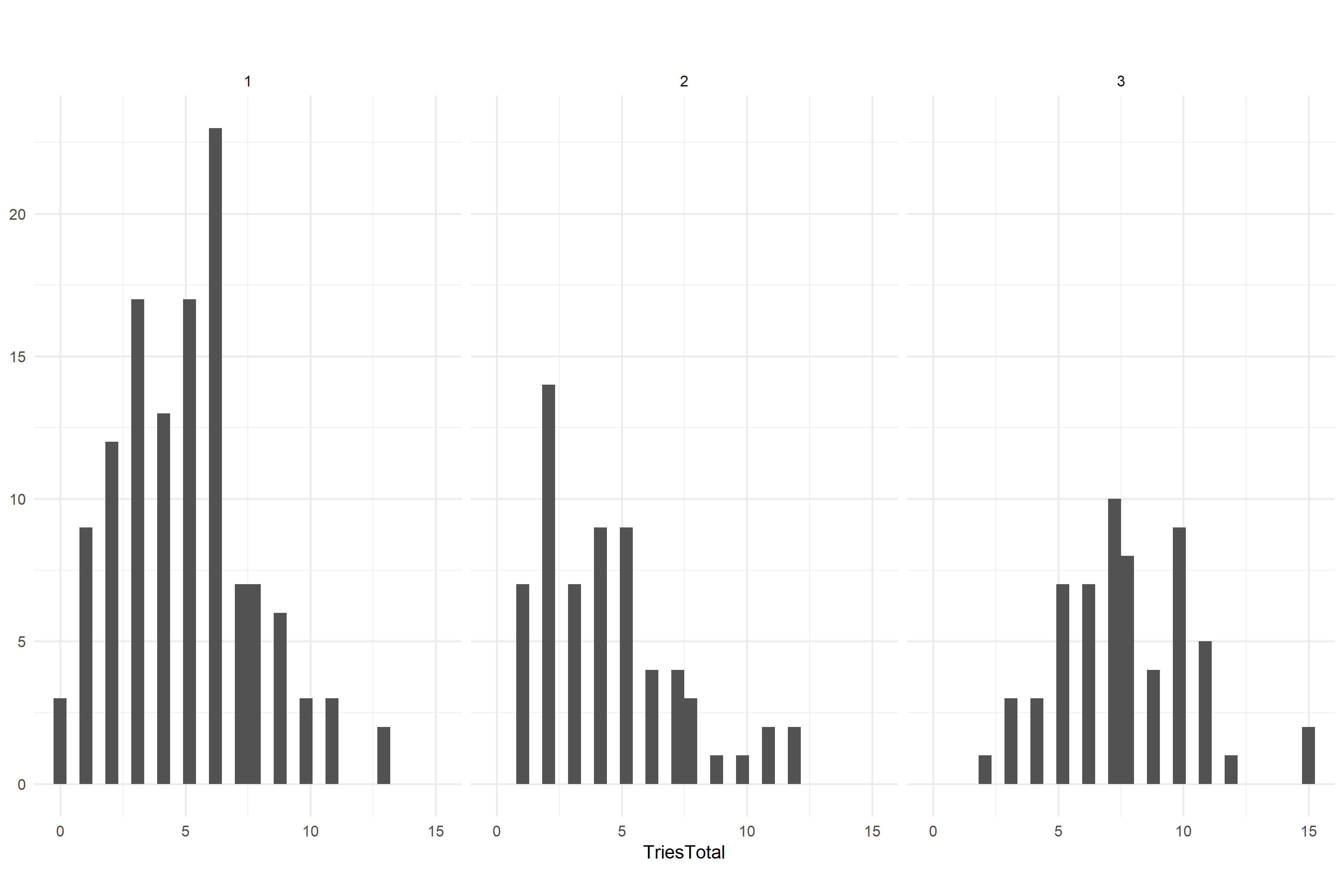}
		\caption{Total tries in homogeneous matches \label{fig:TriesTotalh}}
	\end{subfigure}
	
	\caption{Histograms of distributions of tries.}	
\label{tries}	
\end{figure}

\subsection{Samples and Exploratory Regressions}

We run regressions on different specifications of the general model represented by expression \eqref{eqn:model}  in order to make inferences beyond the casual evidence. We use the variable $code$, to represent the scoring  system, with the base value $1$ for $+4$, $2$ for $NB$ and $3$ for $3+$, as we expressed above with the variables $C2$ and $C3$ in \eqref{eqn:model}. For $T_i$ we use different specifications, namely $TriesTotal$, $TriesLocal$ and $TriesVis$, representing the number of total tries, tries by the local team and tries of the visiting team, respectively. With respect to the control variables $X$ we use different selections from a set that includes $SR$ is a dummy variable indicating that a match correspond to a Super Rugby tournament ( because Super Rugby is clearly different from the other tournaments analyzed here); $Club$, which indicates whether a match is played by club teams or not; $previous$, a dichotomous variable taking value $1$ on the older matches in our database, namely those played between 1987 and 1991. Of particular interest are two variables that can be included in $X$. One is  $diff= |TriesLocal - TriesVis|$ representing the difference in absolute value between the tries of the local and the visiting team. The other, related, control variable is $diff2 = TriesLocal - TriesVis$, capturing the possible advantage of being the local team. Finally, we include $year$ as to capture the possible existence of a temporal trend. \\ 
The results can be seen in Table~\ref{table}.\footnote{All tables of this section can be found at the end of the article.} It can be seen that $3+$ is indeed the scoring method that achieves the highest number of tries, namely between $1$ and over $2$ more than $+4$ (which is our benchmark). $NB$ induces, in general, less tries than $+4$, except in the case of number of tries of the visiting teams. \\ 
With respect to the control variables, we can see that $SR$ has a negative impact while $Club$ and $previous$ have a positive influence. The time trend is not significant in any of the regressions. \\
\begin{sidewaystable}[hbt!]
	\centering
	\scriptsize
	\caption{General Case}
\label{table}
\begin{tabular}{lcccccccccccc} \hline
	& (1) & (2) & (3) & (4) & (5) & (6) & (7) & (8) & (9) & (10) & (11) & (12) \\
	VARIABLES & TriesLocal & TriesVis & TriesTotal & diff & diff2 & TriesLocal & TriesVis & TriesLocal & TriesVis & TriesLocal & TriesVis & TriesTotal \\ \hline
	&  &  &  &  &  &  &  &  &  &  &  &  \\
	TriesVis &  &  &  &  &  & -0.146** &  &  &  &  &  &  \\
	&  &  &  &  &  & (0.067) &  &  &  &  &  &  \\
	2.code & -2.339*** & 0.760*** & -1.579*** & -1.576*** & -3.099*** & -2.228*** & 0.624** &  &  & -1.514*** & 0.760*** & 0.799 \\
	& (0.522) & (0.284) & (0.585) & (0.488) & (0.603) & (0.521) & (0.290) &  &  & (0.395) & (0.284) & (0.590) \\
	3.code & 1.475*** & 0.975*** & 2.450*** & 0.339* & 0.006 & 1.617*** & 1.061*** & 1.475*** & 0.975*** &  &  & 2.159*** \\
	& (0.292) & (0.250) & (0.369) & (0.197) & (0.303) & (0.298) & (0.255) & (0.291) & (0.250) &  &  & (0.466) \\
	Club & -1.375*** & 1.088*** & -0.287 & -1.843*** & -2.710*** & -1.216*** & 1.008*** &  &  &  & 1.088*** &  \\
	& (0.425) & (0.180) & (0.461) & (0.398) & (0.449) & (0.428) & (0.184) &  &  &  & (0.180) &  \\
	previous & 2.022*** & 0.103 & 2.125*** & 2.056*** & 1.919** & 2.037*** & 0.220 &  &  & 2.022*** & 0.103 &  \\
	& (0.632) & (0.401) & (0.654) & (0.575) & (0.832) & (0.621) & (0.410) &  &  & (0.631) & (0.401) &  \\
	SR & -0.775*** & -0.033 & -0.808** &  &  & -0.780*** & -0.078 &  &  &  &  &  \\
	& (0.278) & (0.246) & (0.367) &  &  & (0.280) & (0.247) &  &  &  &  &  \\
	TriesLocal &  &  &  &  &  &  & -0.058** &  &  &  &  &  \\
	&  &  &  &  &  &  & (0.027) &  &  &  &  &  \\
	Club &  &  &  &  &  &  &  &  &  &  &  & 1.883*** \\
	&  &  &  &  &  &  &  &  &  &  &  & (0.523) \\
	year &  &  &  &  &  &  &  &  &  &  &  & -0.004 \\
	&  &  &  &  &  &  &  &  &  &  &  & (0.031) \\
	Constant & 4.650*** & 1.262*** & 5.912*** & 3.688*** & 3.388*** & 4.834*** & 1.533*** & 2.500*** & 2.317*** & 3.825*** & 1.262*** & 10.969 \\
	& (0.391) & (0.111) & (0.398) & (0.380) & (0.414) & (0.407) & (0.169) & (0.222) & (0.200) & (0.192) & (0.111) & (63.039) \\
	&  &  &  &  &  &  &  &  &  &  &  &  \\
	Observations & 473 & 473 & 473 & 473 & 473 & 473 & 473 & 180 & 180 & 293 & 293 & 245 \\
	R-squared & 0.090 & 0.152 & 0.102 & 0.123 & 0.110 & 0.098 & 0.159 & 0.113 & 0.077 & 0.045 & 0.076 & 0.203 \\ \hline
	\multicolumn{13}{c}{ Robust standard errors in parentheses} \\
	\multicolumn{13}{c}{ *** p$<$0.01, ** p$<$0.05, * p$<$0.1} \\
\end{tabular}
\end{sidewaystable}
\subsection{The Homogeneous Case}

Table~\ref{table2} presents the results of running the aforementioned regressions but only on the class of matches between homogeneous teams.\footnote{National teams are considered homogeneous if they are in the same Tier ([24]), and clubs are considered homogeneous if they belong to the same country.} The dependent variables of the regressions are on the first rows, where the first four columns indicate robust errors while the other four give the errors clustered by tournaments. \\
The transition from  $+4$ to $NB$ does not make a difference in robust errors but it does so for tournament errors, adding a little more than half a try (not for the losing team, for which it does not make any difference). The effect of changing from $+4$ to $3+$ is stronger, adding more than $2$ total tries and more than $1$ for the winning team. \\ 
On the other hand, any of the scoring systems induces almost $2$ more total tries in club tournaments than with national teams. Finally, nor $year$ or the $constant$ are significant. 

\begin{sidewaystable}[hbt!]
	\centering
	\caption{Homogeneus case}
\label{table2}
\begin{tabular}{lccccccccc} \hline
	& (1) & (2) & (3) & (4) & (5) & (6) & (7) & (8) & (9) \\
	VARIABLES & TriesTotal & TriesWin & TriesLoss & Diff & TriesTotal & TriesWin & TriesLoss & Diff & TriesTotal \\ \hline
	&  &  &  &  &  &  &  &  &  \\
	2.code & 0.799 & 0.731* & 0.067 & 0.664* & 0.799*** & 0.731*** & 0.067 & 0.664*** & 0.799 \\
	& (0.590) & (0.438) & (0.224) & (0.369) & (0.203) & (0.159) & (0.125) & (0.201) & (0.590) \\
	3.code & 2.159*** & 1.364*** & 0.873*** & 0.491* & 2.159*** & 1.364*** & 0.873*** & 0.491** & 2.159*** \\
	& (0.466) & (0.319) & (0.211) & (0.289) & (0.538) & (0.304) & (0.213) & (0.180) & (0.466) \\
	Club & 1.883*** & 1.171*** & 0.635*** & 0.536* & 1.883*** & 1.171*** & 0.635** & 0.536** & 1.883*** \\
	& (0.523) & (0.367) & (0.213) & (0.306) & (0.542) & (0.308) & (0.223) & (0.215) & (0.523) \\
	year & -0.004 & -0.013 & 0.009 & -0.023 & -0.004 & -0.013 & 0.009 & -0.023 & -0.004 \\
	& (0.031) & (0.029) & (0.011) & (0.031) & (0.031) & (0.023) & (0.009) & (0.016) & (0.031) \\
	Constant & 10.969 & 29.152 & -17.653 & 46.805 & 10.969 & 29.152 & -17.653 & 46.805 & 10.969 \\
	& (63.039) & (58.695) & (21.625) & (62.070) & (61.675) & (45.887) & (17.956) & (32.588) & (63.039) \\
	&  &  &  &  &  &  &  &  &  \\
	Observations & 245 & 245 & 245 & 245 & 245 & 245 & 245 & 245 & 245 \\
	R-squared & 0.203 & 0.138 & 0.213 & 0.030 & 0.203 & 0.138 & 0.213 & 0.030 & 0.203 \\ \hline
	\multicolumn{10}{c}{ Robust standard errors in parentheses} \\
	\multicolumn{10}{c}{ *** p$<$0.01, ** p$<$0.05, * p$<$0.1} \\
\end{tabular}
\end{sidewaystable}

\subsection{The Non-Homogeneous Case}

This analysis, represented in Table~\ref{table3} is performed on the same variables and with the same interpretation of errors as the previous case, but including all the matches. \\
We do not find differences between $NB$ and $+4$. $3+$, instead, makes a difference, although with a lower impact than in the homogeneous case.  Another relevant difference is that in this case the effect of $Club$ gets reversed. That is, winning teams score less while losing ones more, reducing in almost $2$ tries the differences with national teams. \\
Another interesting feature is that $year$ becomes significant. That is, there exists a trend towards increasing the differences in time. \\

\begin{sidewaystable}[hbt!]
	\centering
	\caption{Non Homogeneus case}
	\label{table3}
\begin{tabular}{lccccccccc} \hline
	& (1) & (2) & (3) & (4) & (5) & (6) & (7) & (8) & (9) \\
	VARIABLES & TriesTotal & TriesWin & TriesLoss & Diff & TriesTotal & TriesWin & TriesLoss & Diff & TriesTotal \\ \hline
	&  &  &  &  &  &  &  &  &  \\
	2.code & -0.440 & -0.463 & 0.023 & -0.486 & -0.440 & -0.463 & 0.023 & -0.486 & 0.799 \\
	& (0.526) & (0.487) & (0.147) & (0.490) & (0.712) & (0.683) & (0.080) & (0.663) & (0.590) \\
	3.code & 1.907*** & 1.122*** & 0.785*** & 0.336* & 1.907*** & 1.122*** & 0.785*** & 0.336*** & 2.159*** \\
	& (0.297) & (0.205) & (0.146) & (0.197) & (0.229) & (0.118) & (0.122) & (0.071) & (0.466) \\
	Club & -0.559 & -1.200*** & 0.641*** & -1.841*** & -0.559* & -1.200*** & 0.641*** & -1.841*** & 1.883*** \\
	& (0.438) & (0.395) & (0.138) & (0.398) & (0.253) & (0.193) & (0.131) & (0.211) & (0.523) \\
	year & 0.003*** & 0.002*** & 0.001*** & 0.002*** & 0.003*** & 0.002*** & 0.001*** & 0.002*** & -0.004 \\
	& (0.000) & (0.000) & (0.000) & (0.000) & (0.000) & (0.000) & (0.000) & (0.000) & (0.031) \\
	Constant &  &  &  &  &  &  &  &  & 10.969 \\
	&  &  &  &  &  &  &  &  & (63.039) \\
	&  &  &  &  &  &  &  &  &  \\
	Observations & 473 & 473 & 473 & 473 & 473 & 473 & 473 & 473 & 245 \\
	R-squared & 0.814 & 0.766 & 0.719 & 0.557 & 0.814 & 0.766 & 0.719 & 0.557 & 0.203 \\ \hline
	\multicolumn{10}{c}{ Robust standard errors in parentheses} \\
	\multicolumn{10}{c}{ *** p$<$0.01, ** p$<$0.05, * p$<$0.1} \\
\end{tabular}
\end{sidewaystable}

\subsection{Final Remarks}
All the results obtained, both in the general case and distinguishing between homogeneous and heterogeneous teams, indicate that the results of our theoretical models seem to hold in the real world. \\

\section{Conclusions}

The results of analyzing rugby games in theoretical and empirical terms are consistent. The $3+$ system induces  teams to exert more effort both in the static and empirical models. Moreover, in the particular instance analyzed of the dynamic model, with $\epsilon=0$, the result is the same. In all the models, we find that the $3+$ system ranks first, $+4$ second and $NB$ third.\\
While choosing different values of $\epsilon$ in the dynamical model may change a bit the results, it seems that a sports planner should use the $3+$ bonus point system if the goal is to make the game more entertaining.\\
Some possible extensions seem appropiate topics for future research. If we consider $\epsilon$ as a measure of the ``distance'' between teams playing in a league,  the choice of the appropriate bonus point system may depend on the teams and the moment of the tournament they are playing. Incentives at the beggining are not the same at the end of the tournament. The idea of conditionalizing the design of a tournament taking into account this in an optimal way, can be of high interest. 

\end{document}